\newtheorem{theorem}{Theorem}
\theoremstyle{remark}
\newtheorem{remark}[theorem]{Remark}
\newcommand{\xzero}{\bm{0}}
\newcommand{\xlambda}{\boldsymbol{\lambda}}
\newcommand{\xTheta}{\boldsymbol{\Theta}}
\newcommand{\xzeta}{\boldsymbol{\zeta}}
\newcommand{\xa}{\mathbf{a}}
\newcommand{\xA}{\mathbf{A}}
\newcommand{\xb}{\mathbf{b}}
\newcommand{\xB}{\mathbf{B}}
\newcommand{\xc}{\mathbf{c}}
\newcommand{\xC}{\mathbf{C}}
\newcommand{\xd}{\mathbf{d}}
\newcommand{\xD}{\mathbf{D}}
\newcommand{\xe}{\mathbf{e}}
\newcommand{\xf}{\mathbf{f}}
\newcommand{\xF}{\mathbf{F}}
\newcommand{\xG}{\mathbf{G}}
\newcommand{\xH}{\mathbf{H}}
\newcommand{\xI}{\mathbf{I}}
\newcommand{\xJ}{\mathbf{J}}
\newcommand{\xn}{\mathbf{n}}
\newcommand{\xo}{\mathbf{o}}
\newcommand{\xO}{\mathbf{O}}
\newcommand{\xQ}{\mathbf{Q}}
\newcommand{\xR}{\mathbf{R}}
\newcommand{\xs}{\mathbf{s}}
\newcommand{\xt}{\mathbf{t}}
\newcommand{\xu}{\mathbf{u}}
\newcommand{\xU}{\mathbf{U}}
\newcommand{\xv}{\mathbf{v}}
\newcommand{\xV}{\mathbf{V}}
\newcommand{\xx}{\mathbf{x}}
\newcommand{\xX}{\mathbf{X}}
\newcommand{\xy}{\mathbf{y}}
\newcommand{\xY}{\mathbf{Y}}
\newcommand{\xz}{\mathbf{z}}
\newcommand{\xxri}{_{ri}}
\newcommand{\xxrin}{_{rin}}
\newcommand{\xxrk}{_{rk}}
\newcommand{\xxklr}{_{klr}}
\newcommand{\xxkr}{_{kr}}
\newcommand{\xkRi}{_{k.i}} 
\newcommand{\xkRk}{_{k.k}} 
\newcommand{\xkRj}{_{k.j}} 
\newcommand{\xRsumk}{\xR_{.k}}
\newcommand{\Rklk}{\xR\xkl^k}
\newcommand{\Rklj}{\xR\xkl^j}
\newcommand{\Rjmk}{\xR\xjm^k}
\newcommand{\xin}{_{i,n}}
\newcommand{\xjm}{_{j,m}}
\newcommand{\xxki}{_{ki}}
\newcommand{\xklrin}{_{klrin}}
\newcommand{\xklrkl}{_{klrkl}}
\newcommand{\xklrjm}{_{klrjm}}
\newcommand{\xkm}{_{k,m}}
\newcommand{\xxkj}{_{kj}}
\newcommand{\xxkk}{_{kk}}
\newcommand{\xkl}{_{k,l}}
\newcommand{\xnxk}{_{n_k}}
\newcommand{\yth}{^{th}}
\newcommand{\yin}{^{i,n}}
\newcommand{\yjm}{^{j,m}}
\newcommand{\ykl}{^{k,l}}
\newcommand{\ysp}{^{s+1}}
\newcommand{\ybps}{^{\backprime s}}
\newcommand{\ybpsp}{^{\backprime s+1}}
\newcommand{\tm}{\text{-}}
\newcommand{\tR}{\text{R}}
\newcommand{\tT}{\text{T}}
\newcommand{\tmax}{\text{max}}
\newcommand{\tre}{\text{re}}
\newcommand{\tSINR}{\text{SINR}}
\newcommand{\tSNR}{\text{SNR}}
\newcommand{\ttr}{\text{tr}}
\newcommand{\tr}{\text{rank}}
\newcommand{\tvec}{\text{vec}}
\algnewcommand{\WhileDoEnd}[3]{
  \State \algorithmicwhile\ #1\ \algorithmicdo\ #2\ \algorithmicend\ #3}
\algnewcommand{\IfThenEnd}[3]{
  \State \algorithmicif\ #1\ \algorithmicthen\ #2\ \algorithmicend\ #3}
  \algnewcommand{\IfThenEndwIndent}[3]{
  \State ~~~ \algorithmicif\ #1\ \algorithmicthen\ #2\ \algorithmicend\ #3}
\newcounter{problemcounter}
\newcommand{\problemcounter}[1]{\refstepcounter{problemcounter} \text{P}\theproblemcounter} 
\newtheorem{lemma}{Lemma}
\newtheorem{proposition}{Proposition}
\newcommand{\mIIInVIIIrX}   {\text{K} 3\text{-}\text{M} 3 \text{-} \text{N} 8 \text{-} \text{R} 10}
\newcommand{\mIVnVIIIrIX}   {\text{K} 3\text{-}\text{M} 4 \text{-} \text{N} 8 \text{-} \text{R} 9}
\newcommand{\mVInVIIIrXIV}  {\text{K} 3\text{-}\text{M} 6 \text{-} \text{N} 8 \text{-} \text{R} 14}
\newcommand{\mVIIInVIIIrIII} {\text{K} 3\text{-}\text{M} 8 \text{-} \text{N} 8 \text{-} \text{R} 3}
\newcommand{\mXnXrIII}      {\text{K} 3\text{-}\text{M} 10 \text{-} \text{N} 10 \text{-} \text{R} 3}
\newcommand{\mXnVIIIrIII}   {\text{K} 3\text{-}\text{M} 10 \text{-} \text{N} 8 \text{-} \text{R} 3} 
\newcommand{\mXnVIIIrIV}    {\text{K} 3\text{-}\text{M} 10 \text{-} \text{N} 8 \text{-} \text{R} 4}
\newcommand{\mXVnVIIIrX}    {\text{K} 3\text{-}\text{M} 15 \text{-} \text{N} 8 \text{-} \text{R} 10}
\begin{document}

\setstcolor{red}
\captionsetup[figure]{labelformat=simple,labelsep=period,name={Fig.}}
\renewcommand{\thesubfigure}{(\alph{subfigure})}

\title{Distributed Multi-Stream Beamforming in MIMO Multi-Relay Interference Networks}

\author{Cenk M. Yetis and Ronald Y. Chang
\thanks{This work was supported in part by the Ministry of Science and Technology, Taiwan, under Grant MOST 106-2628-E-001-001-MY3.

The authors are with the Research Center for Information Technology Innovation, Academia Sinica, Taipei 115, Taiwan (e-mail: cenkmyetis@ieee.org; rchang@citi.sinica.edu.tw).
}}
\maketitle

\begin{abstract}
In this paper, multi-stream transmission in interference networks aided by multiple amplify-and-forward (AF) relays in the presence of direct links is considered. The objective is to minimize the sum power of transmitters and relays by beamforming optimization under the stream \mbox{signal-to-interference-plus-noise-ratio} (SINR) constraints. For
transmit  beamforming optimization, the problem is a \mbox{well-known} \mbox{non-convex} quadratically constrained quadratic program (QCQP) that is NP-hard to solve. After \mbox{semi-definite} relaxation (SDR), the problem can be optimally solved via alternating direction method of multipliers (ADMM) algorithm for  distributed implementation. Analytical and extensive numerical analyses demonstrate that the proposed ADMM solution converges to the optimal centralized solution. The convergence rate, computational complexity, and message exchange load of the proposed algorithm outperforms the existing solutions.  Furthermore, by SINR approximation at the relay side, distributed joint transmit and relay beamforming optimization is also proposed that further improves the total power saving at the cost of increased complexity.

 \end{abstract}
\begin{IEEEkeywords}
Alternating direction method of multipliers (ADMM), distributed multi-stream beamforming, MIMO \mbox{multi-relay} interference networks with direct links, quality of service assurance.
\end{IEEEkeywords}

\IEEEpeerreviewmaketitle

\section{Introduction}
The advent of future wireless networks bearing new components in large numbers including eNodeBs, cloud servers, relays, smart grids, massive multiple-input multiple-output (MIMO), and big data nodes, and the recent advances both in software and hardware architectures surging the applicability of parallel and distributed computations \cite{719} have brought innovative solutions and paradigm shifts in recent years\cite[Chap. 10]{722}\cite{383,477,723}. However, the distributed solutions based on conventional dual decomposition and other methods lack the effectiveness on the numerical stability, fast convergence rates \cite{695}, and scalability to high dimensional problems \cite{721} compared to alternating direction method of multipliers (ADMM) which combines the strengths of dual decomposition and augmented Lagrangian methods \cite{724}.

ADMM studies in relay \cite{682,676,714} and point-to-point \cite{684,698,713,710,709,721} networks are limited.
In \cite{682}, an improved version of ADMM is proposed for power minimization under SINR constraints in multi-cluster relay networks with single antenna nodes.
In \cite{676}, max-min SINR optimization is studied for decode-and-forward (DF) relay networks with single antenna transmitters and receivers under the constraints of total transmitter power, total relay power, and total number of relays. In \cite{714}, a distributed transmit power control algorithm via ADMM is proposed for a single relay aided network. ADMM in point-to-point networks is slightly more investigated in the literature. In \cite{721}, cloud radio access networks (C-RAN) with single antenna receivers are optimized for power minimization under SINR constraints. In \cite{684}, power minimization with the worst-case SINR constraints due to the channel state information (CSI) errors in multi-cell coordinated multiple-input single-output (MISO) networks is solved via \mbox{semi-definite} relaxation (SDR). In \cite{698}, power minimization problem with rate constraints in wireless sensors networks is studied. In \cite{713}, beamforming design for power minimization under SINR constraints is proposed in MISO downlink systems by solving second order cone problems (SOCP). In \cite{710}, max-min flow rate optimization problem is considered for software defined radio access networks (SD-RAN) with single antenna nodes under wired and wireless link constraints. In \cite{709}, a distributed power control algorithm is proposed for the utility maximization without SINR constraints in interference networks with single antenna nodes.

The three prominent features of future high performance wireless networks are multi-stream transmissions, \mbox{multi-antenna} nodes, and the line-of-sights between transmitters and receivers. The existence of direct links between transmitters and receivers is an immediate outcome of deploying a large number of intermediate nodes in the network to close the distances between transmitters and receivers. In C-RANs, SD-RANs, and wireless relay networks, the wireless intermediate nodes are called radio access units \cite{721}, base stations \cite{710}, and relays, respectively. Wireless relay networks can be regarded as the wireless communications parts of C-RANs and SD-RANs, which embody wired communications parts in their architectures as well. Although the three mentioned features are critical in practical systems, many studies on relay networks are based on  communications settings with lesser features due to the difficulties that arise from the coexistence of all three features\cite{692}. Among many objective functions \cite[Chap. 8]{214}, the mean square error (MSE) minimization problems are more tractable in relay networks with the three features. Nevertheless, the studies on MSE are still limited\cite{725}, and to the best of our knowledge, the MSE problem in a relay network with all the three features  is studied only in \cite{716}.

The limited attributes of aforementioned researches on relay networks can diminish their applications  in future practical systems. In this paper, we consider amplify-and-forward (AF) multi-relay interference networks with all the three mentioned attributes. The problem of interest in this network setting is to minimize the total power consumption of transmitters and relays with guaranteed quality of service (QoS) by distributed optimization of the transmit beamforming filters.
The QoS metric chosen in this paper is stream SINR, which is interconnected to the data rate and bit error rate (BER) expressions. The problem is a member of non-convex quadratically constrained quadratic programming (QCQP) problems \cite{728,729}, which is not directly amenable to distributed optimization due to the intricate stream SINR constraints. In particular, the SINR constraints are not in a linear form and also are not decoupled over streams for parallel and distributive implementation. We fit the problem into the ADMM framework, a potent tool for distributed optimization.

When designing a solution for the problem, the feasibility of problem, i.e., the  SINR constraints (targets) must be jointly supported, must be assured in the first stage before solving the main problem in the second stage. There are two approaches to assure the success of the first stage, namely, deriving the feasibility conditions \cite{217} and relaxing the initial conditions, e.g., searching for  feasible SINR targets\cite{206,477} and reducing the number of users \cite{711}. The derivation of the feasibility condition is challenging even in simpler networks. In \cite{654}, an approximate condition is derived for a multi-relay network with a single transmitter and a receiver. The feasibility search, on the other hand, can be as costly
as the solving the main problem in terms of the number of iterations and computational complexity per iteration.

In this paper, we propose to use random initializations of beamforming vectors to automatically determine feasible SINR targets with a high probability. All cases where the convergence is slow, fluctuant, and infeasible, i.e., the SINR targets are infeasible, make up less than $2\%$ of the simulations presented in this paper. The elimination of these
mentioned cases  is beneficial in two important applications: (1) Accurate and extensive cross-analyses of crucial network parameters, and benchmarking with competitive schemes over these varying parameters can be executed in short times. (2) By weighting the auto assigned  SINR targets with scalar variables, the feasibility search problem can be reduced to a simpler linear search problem as demonstrated in Section \ref{subsec:SINRtargets}. The \mbox{cross-analyses} are accurate since no approximations are needed in contrast to the approximately derived feasibility conditions \cite{654}.

Relay beamforming design in the existence of direct links has been a long standing open problem due to the challenge in the expression of
SINR metric in terms of relay filters as detailed in Section \ref{sec:Joint}. In this paper, an SINR reformulation is proposed that gives good approximation when the direct channels
and the effective channels between the transmitters and receivers are independent. The proposed distributed joint transmit and relay beamforming assures improved total power saving
than that of only the distributed transmit beamforming optimization. However, since each relay serves all streams in the network, the complexity substantially increases.

The main contributions of this paper are summarized as follows:
\begin{itemize}
\item A generic network model with multiple multi-antenna nodes at all sides, i.e., the transmitter, relay, and receiver sides, to carry out multi-stream transmissions in the presence of direct links is transformed into a compact matrix system model.
\item The proposed distributed ADMM algorithm achieves the optimal centralized solutions in the given generic network model. In addition, in terms of convergence rate, computational complexity, and message exchange load performance metrics, the proposed solution surpasses other optimal distributed algorithms.
\item By eliminating the feasibility of problem stage automatically, an extensive evaluation of the effects of crucial network parameters on the system performance metrics are attained that reveals new insights in this paper. \item Analytical and numerical results demonstrate the lower computational complexity and message exchange load, higher convergence rate, i.e., lesser number of iterations, and
finally the convergence of proposed distributed algorithm.
\item SINR approximation at the relay side is proposed to implement distributed joint transmit and relay beamforming optimization that further improves the total power saving at the cost of increased complexity.
\end{itemize}

The closest to our contribution in this paper is given in [11], where ADMM is also applied. The major difference between \cite{714} and this paper is that a scalar power variable in a single relay network and a beamforming vector in a multi-relay aided network are optimized, respectively. Clearly, the extension in this paper is nontrivial. Further differences between \cite{714} and this paper are discussed in later sections.

ADMM framework is applicable to many problems under mild conditions. The methodology is different than the distributed algorithms that are  designed
for particular problems. In \cite{55}, a distributed algorithm is proposed for interference alignment in signal space. To achieve the alignment in a distributed manner, each user minimizes the interference covariance matrix. Such distributed solutions
that are designed for particular areas  are in contrary to  the distributed ADMM solutions that have broad application areas. In fact, ADMM solution for interference alignment
is already exploited  in \cite{727}.

The rest of the paper is organized as follows. The multi-stream transmission capable multi-relay interference network is introduced in Section \ref{sec:SystemModel}. In Section \ref{sec:ProblemFormulation}, the transmit beamforming design problem for power minimization under stream SINR constraints is formulated. The distributed ADMM solution and benchmark distributed solutions are presented in Section \ref{sec:MultiStream}. The attributes of distributed solutions including convergence, computational complexity, and message exchange load are studied in Section \ref{sec:Attributes}. Distributed joint transmit and relay beamforming filter optimization via SINR approximation at the relay side is provided in Section \ref{sec:Joint}. The numerical results and discussions are presented in Section \ref{sec:NumericalResults}, and finally, the paper is concluded with the summary of main results in Section \ref{sec:Conclusion}.

\begin{figure*}[!t]
 \vspace{-.3cm}
\centering
\includegraphics[height=8.5cm, width=10.5cm] {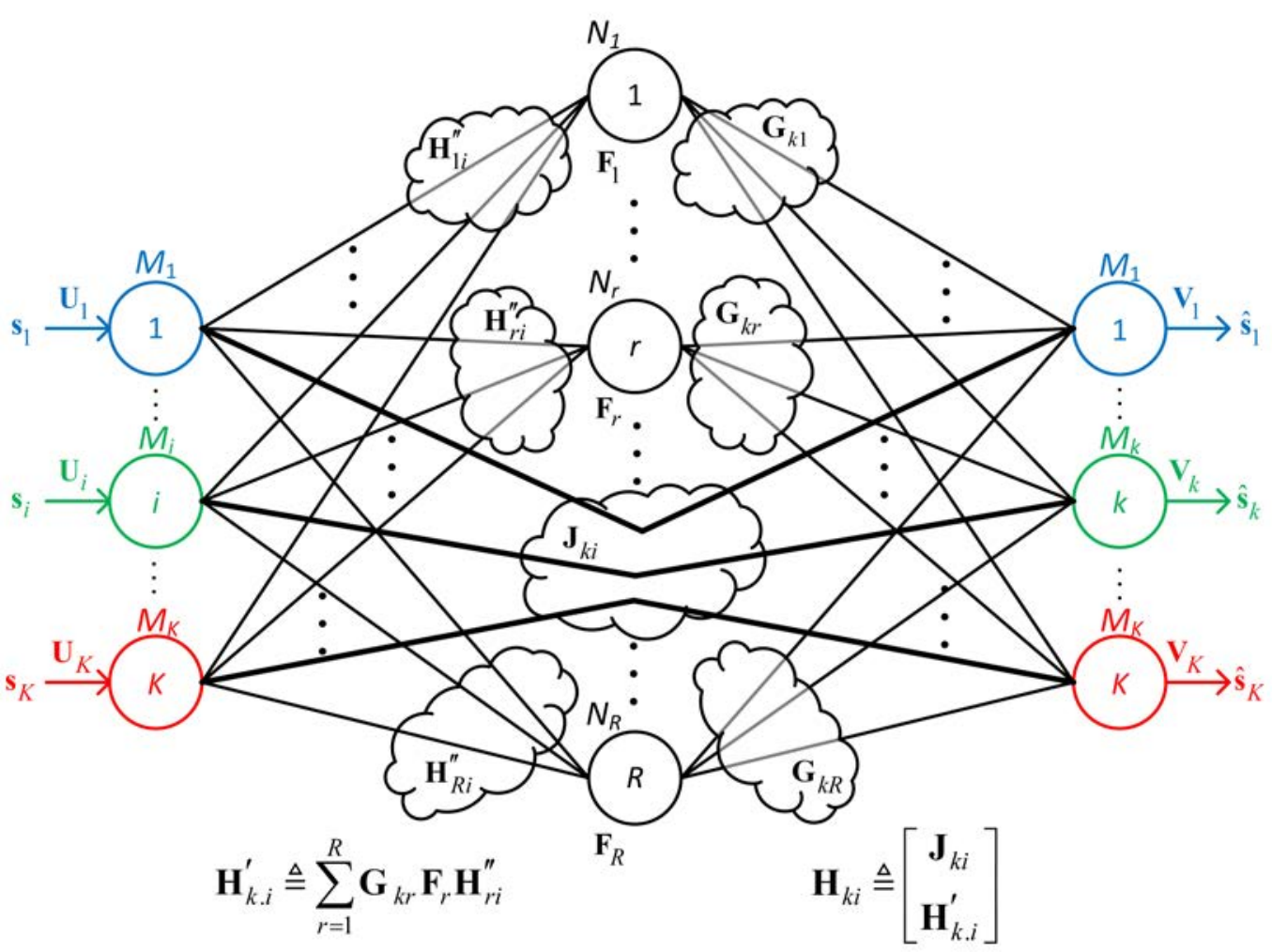}
 \caption{Multi-stream per user transmission in a $K$-user MIMO multi-relay interference network with direct links. Direct links are indicated by the bold lines.}
 \label{Fig:SystemModel}
 \vspace{-.3cm}
\end{figure*}

\section{System Model} \label{sec:SystemModel}

Consider a $K$-user two-hop MIMO multi-relay interference network aided by $R$ relays. The $i\yth$ source (transmitter) and the $i\yth$ destination (receiver) each has $M_i$ antennas while the $r\yth$ relay has $N_r$ antennas as shown in Fig. \ref{Fig:SystemModel}. Each transmitter communicates with its corresponding receiver with the aid of all relays. Without loss of generality, $i\yth$ transmitter and receiver pair can be called $i\yth$ user. We assume that all relay nodes work in half-duplex mode. Thus the communication between the users is completed in two time slots and there are non-negligible direct links between all transmitters and receivers. In the first time slot, the $i\yth$ transmitter transmits the ${M_i\times1}$ signal vector \mbox{$\xx_i=\xU_i\xs_i$}, where \mbox{$\xU_i\in\mathbb{C}^{M_i\times d_i}$} and \mbox{$\xs_i\in\mathbb{C}^{d_i}$} are the transmit beamforming matrix and symbol vector with \mbox{$E\big(\xs_i\xs_i^H\big)=\xI_{d_i}$} and ${E\big(\xs_i\xs_j^H\big)=\xzero}$ for \mbox{$j\neq i$}, respectively. Here, $d_i$ is the number of streams of the $i\yth$ user, i.e., the number of independent data streams to be transmitted between the $i\yth$ transmitter and receiver pair. We assume $d_i\leq M_i$ for sufficient degrees of freedom in signal detection. The transmitted signal from the user has a power constraint
\begin{equation}\label{eqn:180321_PC1}
p_i=E(||\xx_i||^2)=\ttr\big(\xU_i\xU_i^H\big)\leq p_i^\tmax,
\end{equation}
where $p_i^\tmax$ is the maximum power of the $i\yth$ transmitter. The received signal at the $r\yth$ relay and $k\yth$ receiver in the first time slot is given by
\begin{equation}\xy_r=\sum_{i=1}^{K}\xH^{\prime\prime}\xxri\xx_i + \xn_r \text{ and }\xy_k(1)=\sum_{i=1}^{K}\xJ\xxki\xx_i + \xn_k(1),
\end{equation}
 respectively, where $\xH^{\prime\prime}\xxri$ is the channel from the $i\yth$ transmitter to the $r\yth$ relay, and $\xJ\xxki$ is the channel between the $i\yth$ transmitter and the $k\yth$ receiver. $\xn_r$ and $\xn_k(1)$ are the complex additive white Gaussian noise (AWGN) at the $r\yth$ relay and at the $k\yth$ receiver in the first time slot with zero mean, and with the covariances ${E\big(\xn_r\xn_r^H\big)=\sigma_r^2\xI_{N_r}}$ and ${E\big(\xn_k(1)\xn_k^H(1)\big)=\sigma_k^2(1)\xI_{M_k}}$, respectively. In the second time slot, the received signal $\xy_r$ is precoded at the $r\yth$ relay by the ${N_r\times N_r}$ relay filter $\xF_r$, ${\xx_r=\xF_r\xy_r}$. The $r\yth$ relay transmit power is ${p_r=E(||\xx_r||^2)=\ttr\big(\xF_r\xR_{y_r}\xF_r^H\big)\leq p_r^\tmax}$, where ${\xR_{y_r}=E\big(\xy_r\xy_r^H\big)}$ and $p_r^\tmax$ is the covariance matrix of the received signal at and the maximum power of the $r\yth$ relay, respectively. The relay transmit power can be explicitly written as
 \begin{equation}\label{eqn:180321_PC2}
 p_r=\ttr\Big(\!\!\sum_{i=1}^{K}\!\!\xU_i^H\xH_{ri}^{\prime\prime H}\xF_r^H\xF_r\xH^{\prime\prime}_{ri}\xU_i\Big) + \sigma_r^2\ttr\Big(\xF_r\xF_r^H\Big).
 \end{equation}
 The received signal at the $k\yth$ receiver in the second time slot is given by
 \begin{align}\label{eqn:180407_1549}
\!\!  \xy_k(2)\!\!=&\!\!\sum_{r=1}^R\xG_{kr}\xx_r+\xn_k(2)=\sum_{r=1}^R\xG_{kr}\xF_r\xH^{\prime\prime}_{rk}\xx_k\nonumber\\
  \!\!&\!\!+\!\sum_{r=1}^R\xG_{kr}\xF_r\!\!\!\!\!\!\sum_{\substack{j=1,j\neq k}}^{K}\!\!\!\xH^{\prime\prime}_{rj}\xx_j\!+\!\!\sum_{r=1}^R\xG_{kr}\xF_r\xn_r+\xn_k(2),
  \end{align}
 where $\xG_{kr}$ is the channel between the $r\yth$ relay and the $k\yth$ receiver.

Define the effective channel from transmitter $i$ to receiver $k$ through all relays as
\begin{equation}\label{eqn:effectivechannel}
 \xH^\prime\xkRi\triangleq\sum_{r=1}^R\xG_{kr}\xF_r\xH_{ri}^{\prime\prime}.
\end{equation}
 Then, $\xy_k(2)$ in \eqref{eqn:180407_1549} is rewritten as $$\xy_k(2)=\xH^\prime\xkRk\xx_k+\!\sum_{\substack{j=1,j\neq k}}^{K}\xH^\prime\xkRj\xx_j+\sum_{r=1}^R\xG_{kr}\xF_r\xn_r+\xn_k(2).$$ To obtain the SINR expression, the received signal at a receiver is written in terms of the desired signal, interference signal, and noise summands. Thus, define the aggregate channel matrix and noise vector as
\begin{equation}
\xH\xxki\triangleq\left[
       \begin{array}{c}
        \!\!\xJ\xxki\!\! \\
        \!\!\xH^\prime\xkRi\!\! \\
       \end{array}
      \right]\! \text{ and } \xn_k\triangleq\!\left[
  \begin{array}{c}
   \!\!\xn_k(1)\!\! \\
   \!\!\sum_{r=1}^R\xG_{kr}\xF_r\xn_r+\xn_k(2)\!\! \\
  \end{array}
 \right].
\end{equation}
The list of channel notations used in the paper is given in Table \ref{tab:NotationsList} for convenience. Then, the aggregate received signal at receiver $k$ can be written as
\begin{equation}
\xy_k\triangleq\left[
       \begin{array}{c}
        \xy_k(1) \\
        \xy_k(2) \\
       \end{array}
      \right]=\xH\xxkk\xU_k\xs_k+\sum_{\substack{j=1,j\neq k}}^{K}\xH\xxkj\xU_j\xs_j+\xn_k.
\end{equation}
For the sake of linear decoding complexity, the intra- and inter-user stream interferences are treated as noise in this paper. The receive filter is $\bar{\xV}_k\in\mathbb{C}^{2M_k\times d_k}$ since the receive filters for the time slots $1$ and $2$, $\xV_k(1)\in\mathbb{C}^{M_k}$ and $\xV_k(2)\in\mathbb{C}^{M_k}$, respectively, are stacked in this matrix, i.e., $\bar{\xV}_k\triangleq\left[
       \begin{array}{c}
        \xV_k(1) \\
        \xV_k(2) \\
       \end{array}
      \right]$. After applying the receive filter $\bar{\xV}_k$ to the received signal $\xy_k$, the SINR of the $l\yth$ stream of the $k\yth$ user is obtained as
\begin{equation}\label{eqn:180328_1610}
\tSINR\xkl=\frac{\tilde{\zeta}\xkl\ykl}{\sum_{(j,m)\neq(k,l)}\tilde{\zeta}\xkl\yjm+\sigma_{n\xkl}^2},
\end{equation}
where \mbox{$\tilde{\zeta}\xkl\yin\triangleq\big|\bar{\xv}\xkl^H\xH\xxki\xu\xin\big|^2$} and $\bar{\xv}\xkl\triangleq[\xv\xkl^T(1)\,\xv\xkl^T(2)]^T$ is the \mbox{$2M_k\times1$} receive beamforming vector for the $l\yth$ stream of the $k\yth$ user, i.e., $l\yth$ column of the receiver filter $\bar\xV_k$ (similarly $\xv\xkl(i)$, ${i=1,2}$ are the $l\yth$ columns of the receiver filters $\xV_k(i)$, ${i=1,2}$, respectively), \mbox{$\sigma_{n\xkl}^2=\bar{\xv}\xkl^H\xR\xnxk\bar{\xv}\xkl$} is the power of the aggregate noise after the receive beamforming 
\begin{equation}\label{eqn:180515_1330}
\xR\xnxk\!\!\triangleq\!\!\left(\!
\begin{array}{cc}
\!\!\sigma_k^2(1)\xI_{M_k}\!\! & \!\!\!\boldsymbol{0}\!\!\! \\
\!\!\boldsymbol{0}\!\! &\!\!\!\sum_{r=1}^R\!\sigma_r^2\xG_{kr}\xF_r\xF_r^H\xG_{kr}^H\!+\!\sigma_k^2(2)\xI_{M_k}\!\!\\
\end{array}
\!\right)\end{equation} is the covariance matrix of the aggregate noise, and the notation $(j,m)\neq(k,l)$ denotes that $j\neq k$ and/or $m\neq l$.

\section{Problem Formulation}\label{sec:ProblemFormulation}
In this paper, the total power of transmitters and relays under SINR per stream constraints is minimized via the distributed ADMM algorithm \cite{695} by optimizing all stream beamforming filters $\xu\xkl,\,\forall k,l$ in parallel at each stream, aka processor, in the network. From \eqref{eqn:180321_PC1} and \eqref{eqn:180321_PC2}, the total power, i.e., the sum of transmitter and relay powers, can be rewritten in terms of stream filters $\xu\xkl$ ($l\yth$ column of the transmitter filter $\xU_k$) as
\begin{equation}
\sum_{k=1}^{K}\sum_{l=1}^{d_k}\!\xu\xkl^H\xRsumk\xu\xkl,
\end{equation} where ${\xRsumk\triangleq\xI_{M_k}+\xRsumk^\prime, \xRsumk^\prime\triangleq\sum_{r=1}^R\xR\xxrk \text{, and}}$
$${\xR\xxrk\triangleq\xH\xxrk^{\prime\prime H}\xF_r^H\xF_r\xH\xxrk^{\prime\prime}.}$$

Similarly, SINR of a stream can be rewritten as
\begin{equation}\label{eqn:SINR}
 \tSINR\xkl=\frac{\xu\xkl^H\Rklk\xu\xkl}{\sum_{(j,m)\neq(k,l)}\xu\xjm^H\Rklj\xu\xjm+\sigma_{n\xkl}^2},
\end{equation}
where \mbox{$\xR\xkl^i\triangleq\xH\xxki^H\bar{\xv}\xkl\bar{\xv}\xkl^H\xH\xxki$}. The superscript index in $\xR\xkl^i$ indicates the $i\yth$ transmitter, since $\xH\xxki$ is the aggregate channel between the $i\yth$ transmitter and $k\yth$ receiver. The simplifications of $\xR$ matrices are listed in Table \ref{tab:SimplificationsList} for convenience.

The problem formulation is given as

\vspace{2cm}\problemcounter{}\label{op:0720_1620} \vspace{-.6cm}
\begin{align}
&\mkern-5mu \underset{\{\!\xu\xkl\!\}_{\forall k\in\mathcal{K},\forall l\in\mathcal{L}_k}}\min & & \mkern-15mu\sum_{k=1}^{K}\sum_{l=1}^{d_k}\xu\xkl^H\xRsumk\xu\xkl\nonumber\\
& \hspace{.7cm} \text{s.t.} & & \mkern-10mu\tSINR\xkl \geq \gamma\xkl,~\forall k\in\mathcal{K},~\forall l\in\mathcal{L}_k
\end{align}
where ${\mathcal{K}\triangleq\{1,2,\ldots,K\}}$ and ${\mathcal{L}_k\triangleq\{1,2,\ldots,d_k\}}$ is the set of users and set of streams of the $k\yth$ user, respectively.

\begin{table}
\begin{center}
\caption{LIST OF CHANNEL NOTATIONS.} \label{tab:NotationsList}\vspace{-.2cm}
\begin{tabular}{|l|l|}
 \hline
 $\xJ\xxki$ & Channel from the $i\yth$ transmitter to the $k\yth$ receiver\\
 $\xG_{kr}$ & Channel from the $r\yth$ relay to the $k\yth$ receiver \\
 $\xH_{ri}^{\prime\prime}$ & Channel from the $i\yth$ transmitter to the $r\yth$ relay \\
 $\xH^\prime\xkRi$ & Effective channel from the $i\yth$ transmitter to the $k\yth$ receiver \\
 & through all relays \\
 $\xH\xxki$ & Aggregate effective channel from the $i\yth$ transmitter to the \\
 & $k\yth$ receiver, i.e., stacked matrix of $\xJ\xxki$ and $\xH^\prime\xkRi$\\
 \hline
\end{tabular}
\end{center}
\vspace{-.5cm}
\end{table}
\noindent

\section{Multi-Stream Beamforming Under Stream SINR Constraints}\label{sec:MultiStream}

\begin{table}
\begin{center}
\caption{LIST OF SOME SIMPLIFICATIONS.} \label{tab:SimplificationsList} \vspace{-.2cm}
\begin{tabular}{|c|c|}
 \hline
$\xR\xxrk$&$\xH\xxrk^{\prime\prime H}\xF_r^H\xF_r\xH\xxrk^{\prime\prime}$\\
$\xRsumk^\prime$&$\sum_{r=1}^R\xH\xxrk^{\prime\prime H}\xF_r^H\xF_r\xH\xxrk^{\prime\prime}$ \\
$\xRsumk$&$\xI_{M_k}+\xRsumk^\prime$\\
$\xR\xkl^i$&$\xH\xxki^H\bar{\xv}\xkl\bar{\xv}\xkl^H\xH\xxki$\\
 \hline
\end{tabular}
\end{center}
\vspace{-.5cm}
\end{table}

P\ref{op:0720_1620} is a \mbox{non-convex} quadratically constrained quadratic programming (QCQP) problem \cite{728,729}. To obtain the $\xu\xkl$ filters, P\ref{op:0720_1620} can be equivalently rewritten by using the fact that \mbox{$\xu\xkl^H\xR\xkl^i\xu\xkl=\ttr(\xX\xkl\xR\xkl^i)$}, where \mbox{$\xX\xkl\triangleq\xu\xkl\xu\xkl^H$}, and by rewriting the SINR constraint as a summation inequality rather than a division inequality. P\ref{op:0720_1620} can be rewritten as

\vspace{.2cm}\problemcounter{}\label{op:0727_1557} \vspace{-.6cm}
\begin{subequations}
\begin{align}
&\mkern-5mu \underset{\{\xX\xkl\}_{\forall k,l}}\min & & \mkern-8mu\sum_{k=1}^{K}\sum_{l=1}^{d_k}\ttr(\xX\xkl\xRsumk)\\
& \hspace{.3cm}\text{s.t.} & & \mkern-15mu \frac{1}{\gamma\xkl}\ttr(\xX\xkl\Rklk)-\!\!\!\!\!\!\!\!\!\sum_{(j,m)\neq(k,l)}\!\!\!\!\!\!\!\!\!\ttr(\xX\xjm\Rklj)\!\geq\!\sigma_{n\xkl}^2,\nonumber\\
&\mkern-20mu & &\mkern120mu \forall k\in\mathcal{K},\,\forall l\in\mathcal{L}_k\label{op:0727_1557_1stConstraint}\\
&\mkern-20mu & &\mkern-15mu \xX\xkl\in\mathbb{S}_+^{M_k},~\forall k\in\mathcal{K},~\forall l\in\mathcal{L}_k\label{op:0727_1915_SPSDConstraint}\\
&\mkern-20mu & &\mkern-15mu \tr(\xX\xkl)=1,~\forall k\in\mathcal{K},~\forall l\in\mathcal{L}_k.\label{op:0727_1915_RankConstraint}
\end{align}
\end{subequations}
The transmit beamforming covariance matrix constraint \eqref{op:0727_1915_SPSDConstraint} imposes the convex constraint that $\xX\xkl$ matrix belongs to the cone of symmetric and positive semi-definite matrices of dimension $M_k$ (denoted by $\mathbb{S}_+^{M_k}$). Note that, since the covariance matrices in \eqref{op:0727_1557_1stConstraint}, i.e., $\xR\xkl^k$ and $\xR\xkl^j $, are Hermitian, \mbox{$\ttr(\xX\xkl\xR\xkl^i)\in\mathbb{R}$}, which means that the SINR inequality constraint \eqref{op:0727_1557_1stConstraint} is well defined. However, P\ref{op:0727_1557} is still non-convex due to the last constraint \eqref{op:0727_1915_RankConstraint}, hence SDR can be applied, i.e., the last constraint can be relaxed \cite{728,729}. Nonetheless, the resulting SDR of P\ref{op:0727_1557} still cannot be solved distributively. To obtain $\xX\xin$ at the stream ${(i,n)}$ via a parallel and distributed approach, both the objective function and constraints in P\ref{op:0727_1557} must be separable with respect to each stream. The objective function of P\ref{op:0727_1557} is separable; however, the reformulated SINR constraints \eqref{op:0727_1557_1stConstraint} are coupled. Moreover, since the SINR constraints are not linear, ADMM is not directly applicable to P\ref{op:0727_1557}.

\subsection{Proposed ADMM Algorithm}\label{subsec:ProposedADMM}
Before presenting the proposed ADMM algorithm, we briefly review the main steps of ADMM. ADMM can solve the following convex problem

P($\xx,\xy$) \vspace{-.6cm}
\begin{subequations}
\begin{align}\label{op:ADMMSample}
&\underset{\xx\in\mathbb{C}^m,\xy\in\mathbb{C}^n}\min & & \hspace{-1.2cm}f(\xx)+g(\xy)\\
& \hspace{.5cm} \text{s.t.} & & \hspace{-1.2cm} \xA\xx+\xB\xy=\xc\label{op:ADMMSampleConstraint}\\
& & &  \hspace{-1.2cm} \xx\in\mathcal{S}_1,\xy\in\mathcal{S}_2,
\end{align}
\end{subequations}
where $\xA\in\mathbb{C}^{k\times m}$, $\xB\in\mathbb{C}^{k\times n}$, $\xc\in\mathbb{C}^k$, the functions $f$ and $g$ are convex, $\mathcal{S}_1$ and $\mathcal{S}_2$ are nonempty convex sets. Then the augmented Lagrangian for P($\xx,\xy$) is given as
\begin{align}\label{op:ADMMSampleLagrangian}
\mathfrak{L}_{\rho}(\xx,\xy,\xz)=&f(\xx)+g(\xy)+\tre(\xz^H(\xA\xx+\xB\xy-\xc))\nonumber\\
&+\frac{\rho}{2}\|\xA\xx+\xB\xy-\xc\|_2^2,
\end{align}
where $\xz\in\mathbb{C}^k$ is the Lagrange multiplier of the constraint \eqref{op:ADMMSampleConstraint}, re(.) is the real part operator, and $\rho$ is again the Lagrangian dual update step size. P($\xx,\xy$) is solved by ADMM via three steps at each iteration $s$ as follows
\begin{subequations}
\begin{align}
\xx\ysp=&\arg\underset{\xx}\min~\mathfrak{L}_{\rho}(\xx,\xy^s,\xz^s)\label{op:ADMMSola}\\
\xy\ysp=&\arg\underset{\xy}\min~\mathfrak{L}_{\rho}(\xx\ysp,\xy,\xz^s)\label{op:ADMMSolb}\\
\xz\ysp=&\xz^s+\rho(\xA\xx\ysp+\xB\xy\ysp-\xc).\label{op:ADMMSolc}
\end{align}
\end{subequations}

In order to transform the reformulated SINR constraints to a linear form in P\ref{op:0727_1557}, we initially introduce an auxiliary variable for each summand in the SINR constraint \eqref{op:0727_1557_1stConstraint} as
\begin{align}
\zeta\xkl\triangleq&\frac{1}{\gamma\xkl}\ttr(\xX\xkl\Rklk)-\sigma_{n\xkl}^2 \text{ and }\nonumber\\
\zeta\xkl^\backprime\triangleq&-\!\!\!\!\sum_{(j,m)\neq(k,l)}\ttr(\xX\xjm\Rklj).
\end{align}
Then, since the SINR constraints are active at the optimal point,  i.e., the SINR constraints \eqref{op:0727_1557_1stConstraint} must hold with equality \cite{732}, the resulting SDR of P\ref{op:0727_1557} can be equivalently rewritten as

\vspace{.2cm}\problemcounter{}\label{op:0727_1915} \vspace{-.6cm}
\begin{subequations}
\begin{align}
&\mkern-5mu \underset{\{\xX\xkl,\zeta\xkl,\zeta\xkl^\backprime\}_{\forall k,l}}\min & & \mkern-10mu\sum_{k=1}^{K}\sum_{l=1}^{d_k}\ttr(\xX\xkl\xRsumk)\\
& \hspace{.6cm} \text{s.t.} & &\mkern-75mu\zeta\xkl+\zeta\xkl^\backprime=0,\forall k\in\mathcal{K},\forall l\in\mathcal{L}_k\label{op:0727_1915_SINR1stConstraint}\\
&\mkern-20mu & &\mkern-75mu\zeta\xkl\!=\!\frac{1}{\gamma\xkl}\ttr(\xX\xkl\Rklk)\!-\!\sigma_{n\xkl}^2,\forall k\!\in\!\mathcal{K},\forall l\!\in\!\mathcal{L}_k\label{op:0727_1915_SINR2ndConstraint}\\
&\mkern-20mu & &\mkern-75mu\zeta\xkl^\backprime\!=\!-\!\!\!\!\!\!\!\!\!\sum_{(j,m)\neq(k,l)}\!\!\!\!\!\!\!\!\!\ttr(\xX\xjm\Rklj),\forall k\!\in\!\mathcal{K},\forall l\!\in\!\mathcal{L}_k\label{op:0727_1915_SINR3rdConstraint}\\
&\mkern-20mu & & \mkern-75mu \text{The constraint }\eqref{op:0727_1915_SPSDConstraint}. \nonumber
\end{align}
\end{subequations}
Now, the SINR constraint \eqref{op:0727_1557_1stConstraint} is in a linear form via the constraints \eqref{op:0727_1915_SINR1stConstraint}, \eqref{op:0727_1915_SINR2ndConstraint}, and \eqref{op:0727_1915_SINR3rdConstraint}, and the coupling constraint \eqref{op:0727_1915_SINR1stConstraint} is a simple linear constraint that is viable for the ADMM algorithm.
The partial augmented Lagrangian for P\ref{op:0727_1915} can be written as
\begin{align}\label{eqn:1803011340}
&\mathfrak{L}_\rho\big(\{\xX\xkl,\zeta\xkl,\zeta\xkl^\backprime,\lambda\xkl\}_{\forall k,l}\big)=\!\sum_{k=1}^{K}\sum_{l=1}^{d_k}\!\Big(\ttr(\xX\xkl\xRsumk)\nonumber\\
&\hspace{2.5cm}+\lambda\xkl(\zeta\xkl\!+\!\zeta\xkl^\backprime)\!+\!\frac{\rho}{2}(\zeta\xkl\!+\!\zeta\xkl^\backprime)^2\Big),
\end{align}
where $\lambda\xkl$ is the Lagrange multiplier of the constraint \eqref{op:0727_1915_SINR1stConstraint}, and $\rho\in\mathbb{R}_{+}$ is a positive constant parameter for adjusting the convergence speed, i.e., Lagrangian dual update step size.

P($\xx,\xy$) and P\ref{op:0727_1915} have the following correspondences
\begin{align}\label{eqn:180329_0047}
\xx=&[x_{11} \ldots x_{kl} \ldots x_{Kd_K}]^T,~~\xy=[\zeta\xkl \, \zeta\xkl^\backprime]^T\nonumber\\
f(\xx)=&\sum_{k=1}^{K}\sum_{l=1}^{d_k}\ttr(\xX\xkl\xRsumk),~~g(\xy)=0\nonumber\\
\xA=&\xzero,~~ \xB=[1 \,1]^T,~~\xc=\xzero\nonumber\\
\mathcal{S}_1=&\Big\{\xx|\xX\xkl\in\mathbb{S}_+^M\Big\},\nonumber\\
\mathcal{S}_2=&\bigg\{\zeta\xkl\in\mathbb{R}|\zeta\xkl\!=\!\frac{1}{\gamma\xkl}\ttr(\xX\xkl\Rklk)\!-\!\sigma_{n\xkl}^2,\forall k,l,\nonumber\\
&\zeta\xkl^\backprime\in\mathbb{R}|\zeta\xkl^\backprime\!=\!-\!\!\!\!\!\!\!\!\!\sum_{(j,m)\neq(k,l)}\!\!\!\!\!\!\!\!\!\ttr(\xX\xjm\Rklj),\forall k,l\bigg\},
\end{align}
where $x_{kl}\triangleq f_x\left(\ttr(\xX\xkl\xRsumk)\right)$ and $f_x$ is a mapping function between $x_{kl}$ and $\xX\xkl$.

As mentioned earlier, in \cite{714}, total power minimization under SINR constraints problem is solved via a distributed power control algorithm in a simpler network. In contrast, in this work, the problem is solved via beamforming vectors in a generic network. Hence, the problem in this work is more challenging and also the results are more effective than \cite{714}, i.e., higher sum-SINRs can be achieved with lesser power consumptions. Another distinction between \cite{714} and this paper is the difference between the augmented Lagrangian functions, where both of the proposed solutions are fundamentally based on. As seen in \cite[P2]{714}, the auxiliary definitions \cite[Eqs. (3c), (3d)]{714} are augmented in the Lagrangian function \cite[Eq. (4)]{714} to obtain closed-form solutions, whereas in this work, the summation of auxiliary terms \eqref{op:0727_1915_SINR1stConstraint} are augmented as seen in \eqref{eqn:1803011340}. In \cite{714}, power control algorithm is proposed for a simpler relay network architecture. In other words, transmit power scalar variables $p\xkl$ are optimized in \cite{714} as opposed to the transmit beamforming filters $\xu\xkl$ in this work. The summation of auxiliary terms in \cite[Eq. (3b)]{714} cannot be augmented in the Lagrangian function. Otherwise, the $p\xkl$ variable disappears in the Lagrangian differentiation process, hence closed-form solutions cannot be obtained. On the other hand, the auxiliary definitions \eqref{op:0727_1915_SINR2ndConstraint} and \eqref{op:0727_1915_SINR3rdConstraint} cannot be augmented due to the trace operator in this work. Therefore, in this work, \eqref{op:0727_1915_SINR1stConstraint} is augmented in the Lagrangian function and the covariance matrix solutions are obtained via CVX. Resorting to CVX for the solutions of covariance matrices is a common practice in the literature \cite{684,682}. After obtaining the transmit covariance matrices $\xX\xkl$, obtaining the transmit vectors $\xu\xkl$ is a well-known process \cite{728}, i.e., if the obtained covariance matrix is not rank-one, then additional rank-one approximate solution methods can be applied including the Gaussian randomization method.
However, as observed by the numerical results in Section \ref{sec:NumericalResults}, the optimal $\xX\xkl$ matrices are always \mbox{rank-one}. This indicates that the proposed distributed optimal solution to P\ref{op:0727_1915} serves as a global optimal solution to P\ref{op:0720_1620}.

\begin{remark}
P\ref{op:0727_1915} is separable with respect to the streams, thus it requires solving $B$ problems of P${\big(\xX\xkl,\zeta\xkl,\zeta\xkl^\backprime\big)}$, where ${B\triangleq\sum_{i=1}^K d_i}$ is the total number of streams in the network. However, in order to apply the two-block ADMM algorithm which is better understood than $x$-block ADMM algorithms in terms convergence \cite{726}, where ${x\!>\!2}$, the variables ${\big\{\xX\xkl,\zeta\xkl,\zeta\xkl^\backprime\big\}_{\forall k\in\mathcal{K},\,\forall l\in\mathcal{L}_k}}$ can be divided into two groups $\{\xX\xkl\}$ and ${\big\{\zeta\xkl,\zeta\xkl^\backprime\big\}}$. Hence, P\ref{op:0727_1915} is separated into two simpler parts: P($\xX\xkl$) and P${\big(\zeta\xkl,\zeta\xkl^\backprime\big)}$. Therefore, ADMM can distributively solve P\ref{op:0727_1915} by solving $2B$ simpler subproblems in parallel.
\end{remark}
\subsubsection*{The Main Steps of ADMM} ADMM consists of sequential updates of primal variables ${\xX\xkl}$, $\zeta\xkl$, and $\zeta^\backprime\xkl$, and the dual variables ${\lambda\xkl,\mu\xkl,\text{and }\mu\xkl^\backprime}$ \cite{719} as presented below.

\paragraph{Update of $\xX\xkl$} \label{subsubsec:UpdateX} The smaller problem of $\xX\xkl$ is

\vspace{.2cm}P($\xX\xkl$) \vspace{-.4cm}
\begin{align}\label{eqn:180329_0113A}
&\hspace{1cm} \underset{\xX\xkl}\min & &\mkern-10mu \ttr(\xX\xkl\xRsumk)\nonumber\\
&\hspace{1.1cm} \text{s.t.} & &\mkern-10mu \text{The constraints } \eqref{op:0727_1915_SPSDConstraint}^\prime \text{ and } \eqref{op:0727_1915_SINR2ndConstraint}^\prime,
\end{align}
where the superscript $^\prime$ denotes the constraints to be considered without the symbol $\forall$, e.g., \eqref{op:0727_1915_SINR2ndConstraint}$^\prime$ denotes the constraint only for the stream $(k,l)$, not for all streams ${\forall k\in\mathcal{K}}$, $\forall l\!\in\!\mathcal{L}_k$ as seen in \eqref{op:0727_1915_SINR2ndConstraint}.
The above linear problem can be solved locally at each processor $(k,l)$ in parallel.

\paragraph{Updates of $\zeta\xkl \text{ and } \zeta\xkl^\backprime$} The smaller problem of $\zeta\xkl \text{ and } \zeta\xkl^\backprime$ is

\vspace{.2cm}P($\zeta\xkl,\zeta\xkl^\backprime$) \vspace{-.4cm}
\begin{align}
&\hspace{1cm} \underset{\zeta\xkl,\zeta\xkl^\backprime}\min & & \hspace{-.5cm} \lambda\xkl(\zeta\xkl\!+\!\zeta\xkl^\backprime)\!+\!\frac{\rho}{2}(\zeta\xkl\!+\!\zeta\xkl^\backprime)^2\nonumber\\
&\hspace{1.25cm}\text{s.t.}& &\hspace{-.5cm}\text{The constraints } \eqref{op:0727_1915_SINR2ndConstraint}^\prime \text{ and } \eqref{op:0727_1915_SINR3rdConstraint}^\prime.
\end{align}
Then according to the Karush-Kuhn-Tucker (KKT) condition, the updates of auxiliary variables are given as
\begin{subequations} \label{eqn:0818_1542}
\begin{align}
&\zeta\xkl\ysp=-\frac{\lambda\xkl^s+\mu\xkl^s}{\rho}-\zeta\xkl^{\backprime s} \text{ and }\label{eqn:0818_1542_zeta}\\
& \zeta\xkl\ybpsp=-\frac{\lambda\xkl^s+\mu\xkl\ybps}{\rho}-\zeta\xkl\ysp,\label{eqn:0818_1542_zetab}
\end{align}
\end{subequations}
where $\mu\xkl^s$ and $\mu\xkl\ybps$ are the Lagrangian multipliers of the constraints \eqref{op:0727_1915_SINR2ndConstraint} and \eqref{op:0727_1915_SINR3rdConstraint}, respectively.

\paragraph{Updates of $\lambda\xkl,\mu\xkl,\text{and }\mu\xkl^\backprime$} The updates of dual variables ${\lambda\xkl,\mu\xkl,\text{and }\mu\xkl^\backprime}$ are given as
\begin{subequations}\label{eqn:0818_1548}
\begin{align}
\!\!&\lambda\xkl\ysp\!=\!\lambda\xkl^s+\rho\!\left(\zeta\xkl\ysp+\zeta\xkl\ybpsp\right),\label{eqn:0818_1548_lambda}\\
\!\!&\mu\xkl\ysp\!=\!\mu\xkl^s\!+\!\rho_c\!\left(\zeta\xkl\ysp\!-\!\frac{1}{\gamma\xkl}\ttr\left(\xX\xkl\ysp\Rklk\right)\!+\!\sigma_{n\xkl}^2\right), \label{eqn:0818_1548_mu}\\
\!\!&\mu\xkl^{\backprime s+1}\!=\!\mu\xkl^{\backprime s}+\rho_c\!\left(\zeta\xkl^{\backprime s+1}+\!\!\!\!\!\sum_{(j,m)\neq(k,l)}\!\!\ttr\left(\xX\xjm\ysp\Rklj\right)\right),\label{eqn:0818_1548_mub}
\end{align}
\end{subequations}
where ${\rho_c\in\mathbb{R}_{+}}$ is the conventional Lagrangian dual update step size.

\subsection{Pseudocode}\label{subsec:Pseudocode}
The pseudocode of the proposed ADMM algorithm is given in Algorithm \ref{alg:0818_1524}, where Matlab scripting language is used. As seen in step \ref{step:Success} of the algorithm, when the absolute deviation from the SINR target of each stream is within $\Delta\xkl^\tmax$ accuracy, the algorithm is terminated with a success flag.

\begin{algorithm}[tb]\small
\caption{Proposed distributed multi-stream beamforming algorithm via ADMM for MIMO multi-relay interference networks.} \label{alg:0818_1524}
\begin{algorithmic}[1]
\Statex 1) Randomly initialize Lagrangian variables, $\{\zeta\xkl^s,\zeta\xkl^{\backprime s},\lambda\xkl^s,\mu\xkl^s,\mu\xkl^{\backprime s}\}_{\forall k,l}$. 2) Initialize the stream powers equally, $\ttr(\xu\xkl^s\xu\xkl^{Hs})=p_k^\tmax/d_k$, $\forall k,\forall l$, and relay powers, $p_r=p_r^\tmax, \forall r$. 3) Compute the stream SINRs, $\tSINR\xkl$, by randomly initializing the transmit and relay beamforming vectors, and assign the SINR target of a stream of user $k$ to the average SINR of user $k$, $\gamma_k=\big(\sum_{l=1}^{d_k}\tSINR\xkl\big)/d_k$.
\State $iflag=0$, $\%$ $iflag$: infeasibility flag
\State $s=0$
\While {$s\leq s^\tmax-1$}
\State Obtain $\xX\xkl\ysp$ from P($\xX\xkl$),$\forall k,\forall l$ \label{step:ADMM1st}
\State Obtain ($\zeta\xkl\ysp$,$\zeta\xkl\ybpsp$), $\lambda\xkl\ysp$, $\mu\xkl\ysp$, and $\mu\xkl\ybpsp,\forall k,\forall l$ from \eqref{eqn:0818_1542}, \eqref{eqn:0818_1548_lambda}, \eqref{eqn:0818_1548_mu}, and \eqref{eqn:0818_1548_mub}, respectively \label{step:ADMM2nd}
\State $\Delta\xkl\ysp=|\tSINR\xkl\ysp-\gamma_k|,\forall k,\forall l$
\IfThenEnd {$\Delta\xkl\ysp\leq\Delta\xkl^\tmax,\forall k,\forall l$}
      {$s=s^\tmax,$\label{step:Success}}
\State $s=s+1$
\EndWhile
\IfThenEnd {$p_k>p_k^\tmax \text{ or }  p_r>p_r^\tmax,\forall k,r$}
      {$iflag=1,$ \label{step:Infeasible}} 
\end{algorithmic}
\vspace{-.1cm}
\end{algorithm}\vspace{-.1cm}

\subsection{Feasibility of Problem}\label{subsec:Feasibility}
P\ref{op:0727_1915} can be infeasible for the given SINR targets $\gamma\xkl$. Therefore, the feasibility of P\ref{op:0727_1915} must be assured in the first stage before solving P\ref{op:0727_1915} in the second stage. There are basically two techniques in the literature: 1) the feasibility conditions are derived \cite{217}, and 2) the initial conditions are relaxed, e.g., the SINR targets are tested and the feasible targets are searched \cite{206,477}. Deriving the feasibility condition of P\ref{op:0727_1915} is challenging even for a simpler network and a simpler problem \cite{654}. In \cite{654}, power minimization under the worst stream SINR condition is studied for a multi-antenna relay network with a single transmitter and a receiver, and the direct links are neglected. Furthermore, in \cite{654}, the feasibility condition is derived based on the signal-to-interference ratio (SIR) instead of SINR. After assuming the targets are feasible based on the approximate feasibility condition derived from the SIR metric, the beamforming vectors are solved based on SINR. The approximate feasibility condition is more accurate in the high SNR regime, where noise can be neglected. On the other hand, testing the SINR targets and searching for the feasible SINR targets \cite{206,477} are as costly as solving problem P\ref{op:0727_1915}, i.e., both require high iteration numbers and high computational complexities per iteration, which severely impedes the cross-analysis due to the long simulation durations.

In this work, we adopt a new technique. We randomly initialize the transmit and relay beamforming vectors with full powers, i.e., ${p_k^{\tmax}}$, ${\forall k\in\mathcal{K}}$ and $p_r^{\tmax}$, ${\forall r\in\mathcal{R}}$ to determine feasible SINR targets with a high probability. The infeasible cases that make up a small  portion of   tests along with slow and fluctuant converging cases are filtered through step \ref{step:Infeasible} and the detection window as detailed earlier. Hence, by testing randomly and automatically generated feasible SINR targets for any permutation of the network parameters, the cross-analysis is executed systematically and extensively within short simulation durations. Since cross-analysis at this comprehensive level has not been performed in the literature yet, many new insights are revealed in this paper. For all simulations, the receive filters are also randomly initialized but normalized to unity since including the receiver power as another parameter
in the already large set of \mbox{cross-varying} network parameters substantially complicates the \mbox{cross-analysis} in Section \ref{sec:NumericalResults}.

In fact, the transmit and relay power constraints
\begin{align}\label{eqn:180829_1906}
p_k\!=\!&\sum_{l=1}^{d_k}\xu\xkl^H\xu\xkl\!=\!\!\sum_{l=1}^{d_k}\ttr(\xX\xkl)\leq p_k^\tmax \text{ and}\nonumber\\
p_r\!=\!&\sum_{k=1}^{K}\sum_{l=1}^{d_k}\!\xu\xkl^H\xR\xxrk\xu\xkl\!=\!\!\sum_{k=1}^{K}\sum_{l=1}^{d_k}\!\ttr(\xX\xkl\xR\xxrk)\leq p_r^{\prime\tmax},
\end{align}
respectively, where ${p_r^{\prime\tmax}\triangleq p_r^\tmax-\sigma_r^2\ttr(\xF_r\xF_r^H)}$, can be incorporated into P$(\xX\xkl)$. The Matlab script for CVX solution is given in Algorithm \ref{alg:1012_1111}.
However, instead of incorporating the power constraints into CVX as shown in Algorithm \ref{alg:1012_1111}, checking the power constraints  at step \ref{step:Infeasible} of Algorithm \ref{alg:0818_1524} has the following advantages. Firstly, the CVX algorithm runs faster without the incorporated power constraints, particularly for networks with high number of relays and relay antennas. Secondly, if the SINR targets are infeasible, the algorithm starts fluctuating, i.e., different SINRs that do not meet SINR targets are achieved over the iterations while the power constraints are still assured by the constraints incorporated in Algorithm \ref{alg:1012_1111}. Therefore, it is \mbox{time-consuming} to detect whether the fluctuation is due to infeasibility or it is a rare case where the algorithm still converges
after the fluctuation. These two cases significantly hinder the simulation durations and executing extensive cross-analysis in Section \ref{sec:NumericalResults} becomes impractical. As seen
in Section \ref{sec:NumericalResults}, in the worst case, our proposed distributed algorithm requires $35$ iterations on average. Therefore, instead of plugging the power constraints
\eqref{eqn:180829_1906} into the CVX optimization, letting the algorithm converge to the SINR targets and then checking the power constraints by step \ref{step:Infeasible} of Algorithm \ref{alg:0818_1524} can swiftly determine the feasibility of the targets.

\begin{algorithm}[t]\small
\caption{ Matlab script for locally solving P($\xX\xkl$) with incorporated power constraints at each stream in parallel via CVX.} \label{alg:1012_1111}
\begin{algorithmic}[1]
\Statex cvx\textunderscore begin sdp
\Statex \quad variable $\xX\xkl(M,M)$ hermitian
\Statex \quad minimize(trace$(\xX\xkl\xRsumk)$);
\Statex \quad subject to
\Statex \qquad trace$(\xX\xkl\Rklk)==\gamma\xkl(\zeta\xkl+\sigma_{n\xkl}^2)$;
\Statex \qquad trace$(\xX\xkl)<= p_k^\tmax-\sum_{m=1,m\neq l}^{d_k}\text{trace}(\xX\xkm)$;
\Statex \qquad for r=1:$R$
\Statex \qquad ~trace$(\xX\xkl\xR_{rk})<=p_r^{\prime\tmax}-\!\!\!\sum_{(j,m)\neq(k,l)}\text{trace}(\xX\xjm\xR_{rj})$;
\Statex \qquad end
\Statex \quad $\xX\xkl>=0$;
\Statex cvx\textunderscore end
\end{algorithmic}
\vspace{-.1cm}
\end{algorithm}

\subsection{Benchmark Distributed Algorithms}
\subsubsection{ADMM with Bounded Guarantee (ADMM-BG)}
The conventional dual decomposition method \cite{731} for a distributed solution of P\ref{op:0727_1915} is not applicable since the dual of P\ref{op:0727_1915} can be unbounded \cite{684}. This can be demonstrated by considering to solve the dual problem

\vspace{.2cm}P($\mu\xkl,\xX\xkl$) \vspace{-.3cm}
\begin{align}
&\hspace{1cm} \underset{\mu\xkl}\max~~\underset{\xX\xkl}\min & &\mkern-10mu \mathfrak{L}_{\rho\xkl}^{\prime \text{~Conv.}}\nonumber\\
&\hspace{1.05cm} \text{s.t.} & &\mkern-10mu \text{The constraint }\eqref{op:0727_1915_SPSDConstraint}^\prime,
\end{align}
where $$\mathfrak{L}_{\rho\xkl}^{\prime \text{~Conv.}}(\xX\xkl,\eta\xkl^\prime)=\ttr(\xX\xkl\xRsumk)-\eta\xkl^\prime\frac{1}{\gamma\xkl}\ttr(\xX\xkl\Rklk)$$
instead of solving the subproblem P($\xX\xkl$) in step a) of our proposed algorithm presented earlier via CVX. The inner optimization problem of the above dual problem can be unbounded below given the dual variable $\mu\xkl$ so that $-\eta\xkl^\prime\ttr(\xX\xkl\Rklk)\rightarrow-\infty$, i.e., the solution can go to minus infinity. This problem can be avoided by the extra quadratic penalty term added in the ADMM scheme
\begin{align}\label{eqn:180328_2248}
\mathfrak{L}_{\rho\xkl}^{\prime \text{ADMM}}(\xX\xkl,\eta\xkl^\prime,\zeta\xkl)=& \ttr(\xX\xkl\xRsumk)-\eta\xkl^\prime\frac{1}{\gamma\xkl}\ttr(\xX\xkl\Rklk)\nonumber\\
&\hspace{-2cm}+\frac{\rho}{2}\left(\zeta\xkl-\frac{1}{\gamma\xkl}\ttr(\xX\xkl\Rklk)+\sigma_{n\xkl}^2\right)^2.
\end{align}
As seen in \cite[Eq. (23)]{684}, authors enforce the quadratic penalty term also for the first term $\ttr(\xX\xkl\xRsumk)$ in \eqref{eqn:180328_2248} by defining an auxiliary variable
\begin{equation}\label{eqn:180301_1450}
p\xkl\triangleq\ttr(\xX\xkl\xRsumk),\forall k\in\mathcal{K},\forall l\in\mathcal{L}_k,
\end{equation}
and then introducing the slack variables $t\xkl\geq0,\forall k\in\mathcal{K},\forall l\in\mathcal{L}_k$ \cite[Eq. (29d)]{684}
\begin{equation}\label{eqn:0223_2004}
p\xkl=t\xkl.
\end{equation}
Note that the first term $\ttr(\xX\xkl\xRsumk)$ in \eqref{eqn:180328_2248} is the objective term in P\ref{op:0727_1915}, similar to \cite[Eq. (17a)]{684}. Hence, the final local partial augmented Lagrangian function is given as
\begin{align}
\mathfrak{L}_{\rho\xkl}^{\prime\prime \text{ADMM}}(\xX\xkl,\eta\xkl^{\prime\prime},\zeta\xkl,p\xkl,t\xkl)=&p\xkl\!-\!\eta\xkl^{\prime\prime}p\xkl\nonumber\\
&\hspace{-4cm}+\!\frac{\rho}{2}(t\xkl\!-\!p\xkl)^2\!-\!\eta\xkl^\prime\frac{1}{\gamma\xkl}\ttr(\xX\xkl\Rklk)\nonumber\\
&\hspace{-4cm}+\frac{\rho}{2}\left(\zeta\xkl-\frac{1}{\gamma\xkl}\ttr(\xX\xkl\Rklk)+\sigma_{n\xkl}^2\right)^2.
\end{align}
Following the same principle from \cite{684} as explained above, the modified local partial augmented Lagrangian function of \eqref{eqn:1803011340} is
\begin{align}
\mathfrak{L}_{\rho\xkl}(\eta\xkl^{\prime\prime},\zeta\xkl,\zeta\xkl^\backprime,p\xkl,t\xkl,\lambda\xkl)=&p\xkl-\eta\xkl^{\prime\prime}p\xkl\nonumber\\
&\hspace{-4cm}+\frac{\rho}{2}(t\xkl-p\xkl)^2\nonumber\\
&\hspace{-4cm}+\lambda\xkl(\zeta\xkl+\zeta\xkl^\backprime)+\frac{\rho}{2}(\zeta\xkl+\zeta\xkl^\backprime)^2,
\end{align}
where $\lambda\xkl$ is the notation used in \eqref{eqn:1803011340} that corresponds to $\eta\xkl^\prime$ used in this subsection. Hence step a) of our proposed distributed algorithm is modified as follows

\vspace{.2cm}P($p\xkl,\xX\xkl$) \vspace{-.3cm}
\begin{align}\label{eqn:180329_0113B}
&~~~ \underset{p\xkl,\xX\xkl}\min & &\mkern-10mu p\xkl-\eta\xkl^{\prime\prime}p\xkl+\frac{\rho}{2}(t\xkl-p\xkl)^2\nonumber\\
&\hspace{.7cm} \text{s.t.} & &\mkern-10mu \text{The constraints }\eqref{op:0727_1915_SPSDConstraint}^\prime,\eqref{op:0727_1915_SINR2ndConstraint}^\prime, \text{ and } \eqref{eqn:180301_1450}^\prime,
\end{align}
which can be again solved via CVX. Accordingly, the algorithm \ref{alg:0818_1524} needs small modifications. In summary, in \mbox{ADMM-BG}, P($p\xkl,\xX\xkl$) is solved instead of P($\xX\xkl$) in step a) of our proposed algorithm to guarantee bounded solutions.

Due to the added superfluous variables and constraints in \eqref{eqn:180301_1450} and
\eqref{eqn:0223_2004}, the modified approach, ADMM-BG algorithm, has a slower convergence rate, i.e., more iterations are needed, and also has a higher computational complexity per iteration than the original approach in step a), i.e., the proposed algorithm in this paper. As explained in more detail in Section \ref{sec:Attributes}, as the problem size increases, i.e., more constraints in this case, the complexity of problem solution increases since CVX depends on the number of constraints. The iteration number also increases as a direct consequence of a larger problem size.
Moreover, it is observed that \mbox{ADMM-BG} scheme is significantly more unstable, i.e., the rare frequencies of \mbox{slow-,} fluctuant-, and non-convergent cases significantly increase. Finally, more constraints directly necessitates more variables, i.e., more antennas at the transmitter side. Due to the constraint \eqref{eqn:180301_1450},  the minimum number of transmitter antennas is more for ADMM-BG.

\subsubsection{Accelerated Distributed Augmented Lagrangians (ADAL)}
In \cite{682}, an improved version of ADMM is proposed, which is coined as ADAL since it converges faster than the conventional ADMM. ADAL introduces an auxiliary variable for each interference term inside the summation \cite[Eq. (11), 3rd constraint]{682} in contrast to the summation of interference terms \eqref{op:0727_1915_SINR3rdConstraint} as proposed in this work. Hence, instead of two constraints \eqref{op:0727_1915_SINR2ndConstraint} and \eqref{op:0727_1915_SINR3rdConstraint} per stream as proposed in this work, ADAL has 1 (desired signal) + ${B-1}$ (interference signals)${=B}$ constraints per stream. More constraints increase 1) the iteration number (the problem size increases, similar to more number of users and antennas), 2) the computational complexity per iteration (the complexity of CVX is dependent on the number of constraints), and 3) the minimum number of antennas (the transmit covariance is involved in more constraints, thus more antennas-variables are needed). Following the approach in \cite{682}, the resulting SDR of P\ref{op:0727_1557} at the optimal point can be equivalently rewritten as

\vspace{.145cm}\problemcounter{}\label{op:180301_1800} \vspace{-.73cm}
\begin{subequations}
\begin{align}
&\mkern-15mu \underset{\{\xX\xkl,\xzeta\xkl\}_{\forall k,l}}\min & & \mkern-15mu \sum_{k=1}^{K}\sum_{l=1}^{d_k}\ttr(\xX\xkl\xRsumk)\\
& \hspace{.4cm} \text{s.t.} & & \mkern-40mu\sum_{k=1}^{K}\sum_{l=1}^{d_k}\xzeta\xkl=\xzero,\label{op:180301_1800_SINR1stConstraint}\\
& & & \mkern-40mu\zeta\xkl\ykl\!\!=\!\!\frac{1}{\gamma\xkl}\ttr(\xX\xkl\Rklk)\!-\!\sigma_{n\xkl}^2,\!\forall k\!\in\!\mathcal{K},\forall l\!\in\!\mathcal{L}_k\label{op:180301_1800_SINR2ndConstraint}\\
& & & \mkern-40mu\zeta\xjm\ykl\!\!=\!\!-\ttr(\xX\xkl\Rjmk),\forall k,j,\,j\!\neq\!k,\!\forall l,m,m\!\neq\! l\label{op:180301_1800_SINR3rdConstraint}\\
& & & \mkern-40mu \text{The constraint } \eqref{op:0727_1915_SPSDConstraint}, \nonumber
\end{align}
\end{subequations}
where ${\xzeta\xkl\triangleq[\zeta_{1,1}\ykl,\ldots,\zeta\xin,\ldots,\zeta_{K,d_K}\ykl]^T}$ is the ${B\times1}$ vector that stacks all auxiliaries $\zeta\xin\ykl$, ${\forall i\in\mathcal{K},\forall n\in\mathcal{L}_i}$.  The ADAL-Direct algorithm proposed in \cite{682} is shortly referred as ADAL in our work.

The partial augmented Lagrangian for P\ref{op:0727_1915} can be written as
\begin{align}
&\mathfrak{L}_\rho\big(\{\xX\xkl,\xzeta\xkl\}_{\forall k,l},\xlambda\big)\nonumber\\
&=\sum_{k=1}^{K}\sum_{l=1}^{d_k}\!\ttr(\xX\xkl\xRsumk)
+\xlambda^T\sum_{k=1}^{K}\sum_{l=1}^{d_k}\xzeta\xkl+\frac{\rho}{2}\left\Vert\sum_{k=1}^{K}\sum_{l=1}^{d_k}\xzeta\xkl\right\Vert_2^2,
\end{align}
where $\xlambda\triangleq[\lambda_{1,1},\ldots,\lambda_{K,d_K}]^T$ is the vector of Lagrange multipliers of the constraint \eqref{op:0727_1915_SINR1stConstraint}, $\rho\in\mathbb{R}_{+}$ is the Lagrangian dual update step size, and $\|.\|_2$ is the $L_2$ vector norm operator. ADAL proposes to distribute the problem over streams as follows, i.e., the local partial augmented Lagrangian is given as
\begin{align}\label{eqn:180328_2341}
&\mathfrak{L}_{\rho\xkl}\big(\xX\xkl,\xzeta\xkl,\vec{\xzeta}\xjm,\xlambda\big)\nonumber\\
&=\ttr(\xX\xkl\xRsumk)+\xlambda^T\xzeta\xkl+\frac{\rho}{2}\left\Vert\xzeta\xkl+\!\!\!\!\!\!\sum_{(j,m)\neq(k,l)}\vec{\xzeta}\xjm\right\Vert_2^2,
\end{align}
where ${\vec{\xzeta}\xjm}$ is the auxiliary variable transmitted from stream ${(j,m),\forall j\neq k,\,\forall m\neq l}$ to all streams ${(i,n)}$, ${\forall i\in\mathcal{K}\setminus \{j\}}$, ${\forall n\in\mathcal{L}_i\setminus\{m\}}$. In other words, the ${\vec{\xzeta}\xjm}$ variable is evaluated at stream ${(j,m)}$ and is transmitted to all streams ${(i,n),\forall i\neq j,\,\forall n\neq m}$, thus it is a constant vector for stream ${(k,l)}$.

\subsubsection*{The Main Steps of ADAL}
\paragraph{Updates of $\xzeta\xkl$ and $\xX\xkl$}\mbox{} The smaller problem of $\xzeta\xkl$ and $\xX\xkl$ is

\vspace{.2cm}P($\xzeta\xkl,\xX\xkl$) \vspace{-.2cm}
\begin{align}\label{eqn:180329_0113C}
&\underset{\xzeta\xkl,\xX\xkl}\min & &\mathfrak{L}_{\rho\xkl}\nonumber\\
& \hspace{.3cm} \text{s.t.} & & \text{The constraints }\eqref{op:0727_1915_SPSDConstraint}^\prime, \eqref{op:180301_1800_SINR2ndConstraint}^\prime,  \text{ and } \eqref{op:180301_1800_SINR3rdConstraint}^\prime,
\end{align}
where \eqref{op:180301_1800_SINR3rdConstraint}$^\prime$ this time denotes the constraint \eqref{op:180301_1800_SINR3rdConstraint} without $\forall j,m$; that is, for a particular stream $(k,l)$. For instance, assume $K=3$ and $d=2$, then for stream $(k,l)=(1,2)$, the set of $\zeta$ variables is $Z=\{\zeta_{1,1}^{1,2},\zeta_{1,2}^{1,2},\ldots,\zeta_{3,1}^{1,2},\zeta_{3,2}^{1,2}\}$, where ${|Z|=B}$. The above linear problem can be solved locally at each processor in parallel via CVX in Matlab.
\paragraph{Update of $\vec{\xzeta}\xkl$}
The $\xzeta\xkl^s$ output of P($\xzeta\xkl,\xX\xkl$) is further updated as follows
\begin{equation}\label{eqn:180327_2345a}
\vec{\xzeta}\xkl\ysp=\vec{\xzeta}\xkl^s+\tau(\xzeta\xkl^s-\vec{\xzeta}\xkl^s),
\end{equation}
where $\tau$ is a step size. $\vec{\xzeta}\xkl\ysp$ is to be transmitted to the other nodes as mentioned earlier.
\paragraph{Update of $\xlambda$}
As the final step, the dual variables are updated as follows
\begin{equation}\label{eqn:180327_2345b}
\xlambda\ysp=\xlambda^s+\tau\rho\sum_{k=1}^{K}\sum_{l=1}^{d_k}\vec{\xzeta}\xkl\ysp.
\end{equation}
Note that the above update is also local to a processing node $(k,l)$ since $\vec{\xzeta}\xjm\ysp, \forall(j,m)\neq(k,l)$ updates are provided from other nodes than $(k,l)$.

The ${B\!-\!1}$ constraints in \eqref{op:180301_1800_SINR3rdConstraint}$^\prime$ of P($\xzeta\xkl,\xX\xkl$) are overwhelming as opposed to our proposed ADMM algorithm, where the summation of these terms is utilized. Hence, the disadvantages of ADAL follow as mentioned earlier.

\begin{remark}
$\xzeta\xkl$ and $\xX\xkl$ variables in P($\xzeta\xkl,\xX\xkl$) are jointly solved via CVX.  the closed-form solution of $\xzeta\xkl$ can be obtained from the KKT conditions, then only the $\xX\xkl$ variables are solved via CVX to reduce the overall computational complexity.
\end{remark}
 Consider whether the joint solution of $\xzeta\xkl$ and $\xX\xkl$ variables via CVX or closed-form solution of $\xzeta\xkl$ and CVX solution of $\xX\xkl$ option should be chosen. As mentioned earlier, there are $B$ equality constraints per stream due to the \eqref{op:180301_1800_SINR2ndConstraint}$^\prime$ and \eqref{op:180301_1800_SINR3rdConstraint}$^\prime$ constraints. With $\zeta\xkl\yin$ (joint solution) each constraint is a bivariate equation of $\zeta\xkl\yin$ (scalar) and $\xX\xkl$ (matrix), whereas without $\zeta\xkl\yin$ (no joint solution), each equality is a univariate equation of $\xX\xkl$. Thus, in the univariate case, $B$ is the minimum number of antennas required at a node, and in the bivariate case, the requirement is lower than $B$ by the help of additional variables $\zeta\xkl\yin$. As shown earlier, our proposed method has significantly lower number of equality constraints, thus both the iteration number (can be determined from numerical results since the computation of iteration number is not possible) and computational complexity (can be approximately computed in terms of limiting behavior) per iteration are significantly low. Therefore, we choose joint solution of $\xzeta\xkl$ and $\xX\xkl$ via CVX for ADAL so that the iteration number comparison with our proposed distributed algorithm and the number of antennas in Section \ref{sec:NumericalResults} is more fair and convenient to compare.

From the above discussion, it can be also concluded that centralized algorithms can be less restrictive than the distributed algorithms in terms of minimum number of antenna requirements in general. The SDR of the QCQP problem P\ref{op:0727_1557} can be solved via CVX \cite{198} in a centralized manner. Hence, for the centralized algorithm, there are $Kd$ constraints and $KM$ variables. On the other hand, for ADAL, there are ${Kd+1\approx Kd}$ constraints and $M$ variables per processor. Therefore, the number of variables for the distributed algorithm is approximately shrunk $1/K$ times compared to the centralized solution. However, a centralized solution is less attractive when the dimension of the problem is large as mentioned earlier.

Similarly, it can be concluded that the power control algorithms are also more restrictive than the beamforming optimization problems. In particular, there is only 1 power variable $p\xkl$ per stream, on the other hand, there are $M_k$ variables per stream in the transmit beamforming vector $\xu\xkl\in\mathbb{C}^{M_k}$. Since the number of variables in the beamforming vectors that directly determine the SINR targets is significantly larger than the number of power variables, automatic generation of feasible SINR targets by random initializations of beamforming vectors is not possible for power control algorithms.

\subsection{Competitive SINR Targets}

P\ref{op:0720_1620}, where  the objective function is minimizing the total power and the QoS guarantee is meeting the SINR targets, is one of the most important optimization problems for resource allocation in wireless networks \cite{684,713,682,721,714}. This problem can be considered as the dual interest of network utility maximization problems \cite{740}, \cite[Chap. 8]{214}.
Depending on the system design goals, the trade-off between the consumed total power and the achieved sum-SINR in the network can be controlled by the given SINR targets.

The SINR targets can be determined at the physical or higher layers of the network. One of the approaches in the physical layer is to search for feasible SINR targets between the
lower and upper bounds set by beamforming filters. For the lower bound, random filter initializations are good choices as supported by the numerical results in this paper. For
the upper bound, SINR maximizing (\mbox{max-SINR}) filter initializations can be good choices as shown in \cite{55}. In Section \ref{subsec:SINRtargets}, we present the numerical results showing that by searching for higher feasible SINR targets between the lower and upper bounds, the system can achieve higher SINRs at the cost of increased power consumptions.
In Appendix \ref{app:DIA}, closed-form solutions of max-SINR filters for distributed implementation are provided.

\section{Attributes of the Algorithms}\label{sec:Attributes}
\subsection{Convergence}
The global convergence of two-block ADMM is proven based on mild conditions in \cite{726}. The optimal $\xX\xkl$ solution for each subproblem of P\ref{op:0727_1915} is a global optimal solution for the complete SDR problem in P\ref{op:0727_1915} since at each iteration, each subproblem of P\ref{op:0727_1915} is convex, and the iterations are pursued until convergence. We analytically and numerically show in this section and in Section \ref{sec:NumericalResults}, respectively, that the proposed distributed algorithm is convergent.

The convergence of ADMM to the global optimum of P($\xx,\xy$) under mild conditions is given by the following lemma.
\begin{lemma} \cite[Proposition 4.2]{719}\label{lem:180327_2237}
Assuming that the optimal solution set of P($\xx,\xy$) is nonempty, and $\xA^T\xA$ and $\xB^T\xB$ are invertible, the sequence of solutions $\{\xx\ysp,\xy\ysp,\xz\ysp\}$ obtained from \eqref{op:ADMMSola}, \eqref{op:ADMMSolb}, and \eqref{op:ADMMSolc}, respectively, is bounded and every limit point of $\{\xx\ysp,\xy\ysp\}$ is an optimal solution of P($\xx,\xy$).
\end{lemma}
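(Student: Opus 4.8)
The plan is to follow the classical Lyapunov (variational--inequality) analysis of two-block ADMM, specialized to P($\xx,\xy$). First I would use the hypothesis that the optimal set is nonempty to fix a primal--dual optimal triple $(\xx^\star,\xy^\star,\xz^\star)$: since $f,g$ are convex, $\mathcal{S}_1,\mathcal{S}_2$ are convex, and the coupling constraint \eqref{op:ADMMSampleConstraint} is affine, such a triple exists and satisfies the KKT system $-\xA^H\xz^\star\in\partial f(\xx^\star)+N_{\mathcal{S}_1}(\xx^\star)$, $-\xB^H\xz^\star\in\partial g(\xy^\star)+N_{\mathcal{S}_2}(\xy^\star)$, and $\xA\xx^\star+\xB\xy^\star=\xc$, where $N_{\mathcal{S}}$ denotes a normal cone.

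Second, I would record the first-order optimality conditions of the two ADMM subproblems. Minimizing $\mathfrak{L}_\rho(\xx,\xy^s,\xz^s)$ over $\mathcal{S}_1$ yields at $\xx\ysp$ a subgradient inclusion containing the term $\xA^H\big(\xz^s+\rho(\xA\xx\ysp+\xB\xy^s-\xc)\big)$, and minimizing $\mathfrak{L}_\rho(\xx\ysp,\xy,\xz^s)$ over $\mathcal{S}_2$ yields at $\xy\ysp$ the analogue containing $\xB^H\big(\xz^s+\rho(\xA\xx\ysp+\xB\xy\ysp-\xc)\big)$. Using the dual update \eqref{op:ADMMSolc}, the first bracket equals $\xz\ysp-\rho\xB(\xy\ysp-\xy^s)$ and the second equals $\xz\ysp$, so both inclusions are re-expressed through the new multiplier $\xz\ysp$ alone.

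The core step is the descent of the merit function $V^s\triangleq\tfrac{1}{\rho}\|\xz^s-\xz^\star\|_2^2+\rho\|\xB(\xy^s-\xy^\star)\|_2^2$. Pairing the $\xx$- and $\xy$-inclusions against $\xx\ysp-\xx^\star$ and $\xy\ysp-\xy^\star$, invoking the $\tre\langle\cdot,\cdot\rangle$-monotonicity of $\partial f$, $\partial g$ and of the two normal cones, and substituting $\xA\xx^\star+\xB\xy^\star=\xc$, a standard rearrangement gives $V^{s+1}\le V^s-\rho\|\xr\ysp\|_2^2-\rho\|\xB(\xy\ysp-\xy^s)\|_2^2$ with the primal residual $\xr\ysp\triangleq\xA\xx\ysp+\xB\xy\ysp-\xc$. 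The conclusions then follow at once: $V^s$ is non-increasing and $\ge 0$, so $\{\xz^s\}$ and $\{\xB\xy^s\}$ are bounded, and invertibility of $\xB^H\xB$ promotes this to boundedness of $\{\xy^s\}$; telescoping the descent inequality gives $\sum_s\big(\|\xr\ysp\|_2^2+\|\xB(\xy\ysp-\xy^s)\|_2^2\big)<\infty$, hence $\xr^s\to\xzero$ and $\xB(\xy\ysp-\xy^s)\to\xzero$, so $\xy\ysp-\xy^s\to\xzero$ again by invertibility of $\xB^H\xB$; since $\xA\xx^s=\xr^s-\xB\xy^s+\xc$ is bounded, invertibility of $\xA^H\xA$ makes $\{\xx^s\}$ bounded too. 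Finally, along any convergent subsequence $(\xx^{s_t},\xy^{s_t},\xz^{s_t})\to(\bar\xx,\bar\xy,\bar\xz)$ one passes to the limit in the two subgradient inclusions --- legitimate by closedness of the graphs of $\partial f$, $\partial g$, $N_{\mathcal{S}_1}$, $N_{\mathcal{S}_2}$ together with $\xr^s\to\xzero$ and $\xy\ysp-\xy^s\to\xzero$ --- and recovers the KKT system for $(\bar\xx,\bar\xy,\bar\xz)$, so $(\bar\xx,\bar\xy)$ is optimal for P($\xx,\xy$).

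I expect the \emph{descent inequality} to be the main obstacle: one must insert $\pm$ the optimal point into both subproblem conditions and combine them so that every indefinite cross term either telescopes into $V^s-V^{s+1}$ or is absorbed by the two nonnegative residual terms, and the precise use of \eqref{op:ADMMSolc} to eliminate the intermediate quantity $\rho\xB(\xy\ysp-\xy^s)$ is what makes this work; everything after it is routine. A shortcut is also available, since the statement is verbatim \cite[Proposition 4.2]{719}: it then suffices to verify that $f,g,\xA,\xB,\xc$ of P($\xx,\xy$) meet that proposition's hypotheses, which they do.
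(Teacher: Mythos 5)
Your proposal is correct in substance, but it takes a genuinely different route from the paper for the simple reason that the paper offers no proof at all: the lemma is imported verbatim as \cite[Proposition 4.2]{719}, so the paper's ``proof'' is exactly your closing shortcut --- verify that $f$, $g$, $\xA$, $\xB$, $\xc$ of P($\xx,\xy$) meet that proposition's hypotheses. Your reconstruction is the classical Lyapunov/variational-inequality argument (Bertsekas--Tsitsiklis, Boyd et al.), and its skeleton is sound: the re-expression of both subproblem optimality conditions through $\xz\ysp$ alone (the algebra $\xz^s+\rho(\xA\xx\ysp+\xB\xy^s-\xc)=\xz\ysp-\rho\xB(\xy\ysp-\xy^s)$ is right), the descent inequality for $V^s$, the telescoping that yields $\xr^s\rightarrow\xzero$ and $\xB(\xy\ysp-\xy^s)\rightarrow\xzero$, and the invertibility hypotheses used exactly where they are needed, namely to promote boundedness of $\{\xB\xy^s\}$ and $\{\xA\xx^s\}$ to boundedness of $\{\xy^s\}$ and $\{\xx^s\}$. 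Two points deserve more care than you give them. First, the existence of a dual optimal $\xz^\star$ does not follow from nonemptiness of the primal optimal set together with affineness of the coupling constraint alone; one still needs a relative-interior/Slater-type qualification (or one reads the hypothesis as positing a saddle point, which is effectively what the cited source does). Second, passing to the limit in the subgradient inclusions requires $f$ and $g$ to be closed proper convex so that the graphs of $\partial f$, $\partial g$, $N_{\mathcal{S}_1}$, $N_{\mathcal{S}_2}$ are closed; this holds here but should be stated. What your explicit argument buys over the bare citation is visibility of where $\xA^T\xA$ invertibility actually enters: in the paper's application $\xA=\xzero$, and the subsequent Proposition simply announces that this condition ``is dropped.'' Your proof shows that dropping it removes only the $\xx$-boundedness step, which must then be supplied by other means (here, by the compactness induced on $\{\xX\xkl\}$ through the trace/power and SINR equality constraints), a gap the paper glosses over.
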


\begin{proposition}
The iterates $\{\xX\xkl\ysp,\zeta\xkl\ysp,\zeta\xkl^{\backprime\, s+1}\}_{\forall k,l}$ and $\{\lambda\xkl\ysp,\mu\xkl\ysp,\mu\xkl^{\backprime s+1}\}_{\forall k,l}$ in Algorithm \ref{alg:0818_1524} converge to the optimal solution of primal and dual problems as $s\rightarrow\infty$. When the algorithm converges, the optimal $\{\xX\xkl\}_{\forall k,l}$ solutions obtained in step \ref{step:ADMM1st} of Algorithm \ref{alg:0818_1524} are the global optimal solutions of P\ref{op:0727_1915}.
\end{proposition}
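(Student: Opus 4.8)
The plan is to derive the statement as a specialization of the generic two-block ADMM convergence result in Lemma~\ref{lem:180327_2237}. First I would make the identification of P\ref{op:0727_1915} with the canonical problem P($\xx,\xy$) precise through the correspondence in \eqref{eqn:180329_0047}: the first block gathers the transmit covariances $\{\xX\xkl\}$ restricted to the product of positive semidefinite cones $\{\xX\xkl\in\mathbb{S}_+^{M_k}\}$; the second block gathers the auxiliary scalars $\{\zeta\xkl,\zeta\xkl^\backprime\}$ restricted to the affine set carved out by the defining equalities \eqref{op:0727_1915_SINR2ndConstraint} and \eqref{op:0727_1915_SINR3rdConstraint}; the objective sits in $f$ with $g\equiv 0$; and the coupling equality \eqref{op:0727_1915_SINR1stConstraint} plays the role of \eqref{op:ADMMSampleConstraint}. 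Under this identification I would check that the three ADMM sweeps \eqref{op:ADMMSola}--\eqref{op:ADMMSolc} specialize \emph{exactly} to the updates of Algorithm~\ref{alg:0818_1524}: the $\xX\xkl$-minimization is P($\xX\xkl$) in step~\ref{step:ADMM1st}, the $(\zeta\xkl,\zeta\xkl^\backprime)$-minimization admits the closed form \eqref{eqn:0818_1542}, and the dual ascent is \eqref{eqn:0818_1548}; the multipliers $\mu\xkl,\mu\xkl^\backprime$ of the defining equalities are dualized with the additional step $\rho_c$ only to obtain those closed forms, and I would verify that they do not alter the limiting primal point.

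Next I would verify the hypotheses of Lemma~\ref{lem:180327_2237}. Nonemptiness of the optimal set follows from Section~\ref{subsec:Feasibility}: the SINR targets are generated so that P\ref{op:0727_1915} is feasible; its objective is bounded below since $\xRsumk\succeq\xI_{M_k}\succ\xzero$ and $\xX\xkl\succeq\xzero$ force $\ttr(\xX\xkl\xRsumk)\ge 0$; and once the power budgets in \eqref{eqn:180829_1906} are imposed the feasible region may be taken compact, so a minimizer is attained. For the rank conditions, $\xB=[1\,1]^{T}$ (blocked over streams) gives $\xB^{T}\xB=2\xI$, which is invertible. The genuinely delicate point, which I expect to be the main obstacle, is that the coupling constraint \eqref{op:0727_1915_SINR1stConstraint} does not involve the first block at all, so $\xA=\xzero$ and $\xA^{T}\xA$ is singular; the lemma does not apply verbatim to this naive splitting. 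I would handle this by keeping the defining equalities \eqref{op:0727_1915_SINR2ndConstraint} and \eqref{op:0727_1915_SINR3rdConstraint} inside the feasible description of the two blocks, so that the $\xX\xkl$-update P($\xX\xkl$) is in fact coupled to the previous $\zeta\xkl$ iterate and each per-stream subproblem is a strictly feasible convex program with a unique, bounded solution, and then invoking the version of the convergence result that imposes a full-column-rank (equivalently, unique-minimizer) condition only on the block that actually appears in the coupling constraint -- or, equivalently, recasting the scheme via a proximal/Moreau-envelope reformulation that restores the required invertibility. Carrying this out rigorously, while simultaneously accounting for the extra $\rho_c$-dual updates of the dualized equalities, is the crux of the argument.

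With the hypotheses in force, Lemma~\ref{lem:180327_2237} yields that $\{\xX\xkl\ysp,\zeta\xkl\ysp,\zeta\xkl\ybpsp\}$ and the multiplier sequence are bounded and that every limit point of the primal iterates is optimal for P($\xx,\xy$), hence for P\ref{op:0727_1915}; together with the stopping test in step~\ref{step:Success} this gives the claimed convergence of Algorithm~\ref{alg:0818_1524}. For the last assertion I would argue that, after SDR, P\ref{op:0727_1915} is convex; each $\xX\xkl$-subproblem P($\xX\xkl$) is a convex SDP solved to global optimality, and the $(\zeta\xkl,\zeta\xkl^\backprime)$-updates \eqref{eqn:0818_1542} are the exact KKT points of their subproblem; at a limit point the primal residuals of \eqref{op:0727_1915_SINR1stConstraint}, \eqref{op:0727_1915_SINR2ndConstraint}, and \eqref{op:0727_1915_SINR3rdConstraint} all vanish and the multipliers converge, so the limit of $(\{\xX\xkl\},\{\zeta\xkl\},\{\zeta\xkl^\backprime\},\{\lambda\xkl\},\{\mu\xkl\},\{\mu\xkl^\backprime\})$ satisfies the full KKT system of P\ref{op:0727_1915}. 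Since that problem is convex, these conditions are sufficient for global optimality, and hence the limiting $\{\xX\xkl\}$ produced in step~\ref{step:ADMM1st} are global optima of P\ref{op:0727_1915}, as claimed.
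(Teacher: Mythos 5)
Your proposal takes essentially the same route as the paper: identify P\ref{op:0727_1915} with the canonical two-block form P($\xx,\xy$) via the correspondences in \eqref{eqn:180329_0047} and invoke Lemma~\ref{lem:180327_2237}, checking nonemptiness of the optimal set and concluding global optimality from convexity of the relaxed problem. The point you single out as the crux---that $\xA=\xzero$ makes $\xA^T\xA$ singular, so the lemma does not apply verbatim---is a genuine subtlety, but the paper's own proof disposes of it in one clause by asserting that when $g(\xy)=0$ and $\xA=\xzero$ the $\xA^T\xA$ condition ``is dropped,'' so your more careful handling (keeping the defining equalities \eqref{op:0727_1915_SINR2ndConstraint}--\eqref{op:0727_1915_SINR3rdConstraint} inside the block feasible sets and imposing the rank condition only on the block that appears in the coupling constraint) goes beyond, rather than against, what the paper actually argues.
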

\begin{proof}
Since ${g(\xy)=0}$ and ${\xA=\xzero}$, the Lagrangian function \eqref{op:ADMMSampleLagrangian} is modified correspondingly, and also, the ${\xA^T\xA}$ condition in Lemma \ref{lem:180327_2237} is dropped. Based on Lemma \ref{lem:180327_2237} and the correspondences in \eqref{eqn:180329_0047}, the proposition on the convergence of the proposed distributed algorithm follows.
\end{proof}
\subsection{Computational Complexity}\label{subsec:CompComplexity}
The most computationally intensive part of the proposed ADMM algorithm is step \ref{step:ADMM1st} of Algorithm \ref{alg:0818_1524}, where the general-rank positive semi-definite matrices $\{\xX\xkl\}$ are obtained explicitly via CVX without closed-form solutions. Obtaining closed-form beamforming solutions is challenging even in simpler network architectures \cite{684,682}. After obtaining the optimal general-rank positive semi-definite matrices $\{\xX\xkl\}$ via the proposed ADMM algorithm, rank-one solutions $\xu\xkl$, where \mbox{$\xu\xkl\xu\xkl^H=\xX\xkl$}, can be efficiently derived \cite{728}. Although there are no closed-form solutions, semi-definite programming (SDP) has still many real-time applications \cite{729,730}.
\begin{remark}
Similar to the subproblem of proposed distributed algorithm P($\xX\xkl$) in \eqref{eqn:180329_0113A}, note that the SDP solutions, e.g., via CVX, are also needed for the subproblems of benchmark schemes ADMM-BG P($p\xkl,\xX\xkl$) in \eqref{eqn:180329_0113B} and ADAL P($\xzeta\xkl,\xX\xkl$) in \eqref{eqn:180329_0113C}.
\end{remark}
The complexity of SDP solution for a generic SDR of a QCQP problem  \cite[Eq. (5)]{728} is given as $\mathcal{O}(\max\{M,c\}^4M^{1/2}\log(1/\epsilon))$,
where $M$ is the antenna number, $c$ is the number of constraints, and ${\epsilon>0}$ is the solution accuracy. Note that $c=1,2,\text{and }B$ for the proposed algorithm, ADMM-BG, and ADAL, respectively. Hence, for the former two algorithms, ${M\geq c}$, while for the last algorithm, ${M\geq c}$ or ${M<c}$. For the sake of simplicity, we evaluate the complexity of algorithms based on $\mathcal{O}(Mc)$; that is, the limiting behavior of simple multiplication of the number of variables and constraints. The complexity benchmark of algorithms based on this simplified formula also matches with CPU time based simulation benchmarks as shown in the end of next section.

For the proposed distributed algorithm, $P(\xX\xkl)$ has $M$ variables and $1$ constraint, P($\zeta\xkl,\zeta\xkl^\backprime$) has $2$ variables that have closed-form solutions \eqref{eqn:0818_1542}, and finally $3$ more variables in \eqref{eqn:0818_1548}, thus the complexity of proposed distributed algorithm is ${\mathcal{O}(M+5)}$ per processor. For ADMM-BG, P($p\xkl,\xX\xkl$) has $M$ (from $\xX\xkl$) $+$ $1$ (from $p\xkl$)  variables and $2$ constraints. Due to the similar steps to our proposed algorithm, ADMM-BG also has ${2+3=5}$ variables, and additional $1$ variable $t\xkl$. Hence, the complexity of ADMM-BG is ${\mathcal{O}(2(M+1)+6)}$. Finally, for ADAL, P($\xzeta\xkl,\xX\xkl$) has $M+B$ variables and $B$ constraints, and $B$ variables in each of \eqref{eqn:180327_2345a} and \eqref{eqn:180327_2345b}. Thus, the complexity of ADAL is ${\mathcal{O}(B(M+B)+2B)}$. In summary, the proposed algorithm has the lowest complexity, followed by ADMM-BG, and then ADAL. The numerical CPU benchmarks of the proposed and existing algorithms are presented in Section \ref{subsec:vsExisting}.
\subsection{Message Exchange Load}
For the update of \eqref{eqn:0818_1548_mub} at each processor, a single value is needed, i.e., the summation term, which can be delivered by a collector node. This means that each processor needs to send out ${B\!-\!1}$ scalar values, one for each other processor, to the collector node. In total, ${B(B\!-\!1)}$ and $B$ numbers of scalar values need to be exchanged from the processors to the collector node and from the collector node to the processors, respectively. As a result, in total, $B^2$ scalar values need to be exchanged in the network. Since the update  \eqref{eqn:0818_1548_mub} is also needed for ADMM-BG, the message exchange loads of ADMM-BG and the proposed algorithm are same. For ADAL, the updates \eqref{eqn:180327_2345b} are achieved by sending out \eqref{eqn:180327_2345a} from each stream to the collector node. Therefore, for both message exchange directions between the collector node and processors, exchange of $B^2$ scalar values is needed, making a sum of $2B^2$ exchange of scalar values in the network. In summary, ADAL has the highest message exchange load, followed by ADMM-BG and the proposed algorithm.

Note that a deployed central node or any node among the existing nodes of relay network can serve as a collector node. Moreover, parallel updates of the variables in the ADMM algorithm are robust to delays and errors \cite{695,727}.

\section{Distributed Joint Transmit and Relay Beamforming Filter Optimization} \label{sec:Joint}
Relay filter design has been a long standing open problem when the direct links exist. To pinpoint the hurdle that direct links cause in relay filter design,  SINR \eqref{eqn:180328_1610} is reformulated as an explicit function of relay filters next.

\subsection{SINR Reformulation}
Consider the signal power from stream $n$ of transmitter $i$ to stream $l$ of receiver $k$
\begin{align}\label{eqn:171123_1150}
p_{klin}\!\!=&\xu\xin^H\!\!\left[\xJ\xxki^H~\xH^{\prime H}\xkRi\right]{\!\!\!\left[\begin{array}{c}\xv\xkl(1) \\  \xv\xkl(2) \\ \end{array}\right]}
\!\!\!\left[\xv\xkl^H(1)\xv\xkl^H(2)\right]{\!\!\!\left[\begin{array}{c}\xJ\xxki \\ \xH^{\prime}\xkRi \\ \end{array}\right]}\!\xu\xin \nonumber\\
=&\xu\xin^H\xJ\xxki^H\xv\xkl(1)\xv\xkl^H(1)\xJ\xxki\xu\xin\nonumber\\
&+\xu\xin^H\xH^{\prime H}\xkRi\xv\xkl(2)\xv\xkl^H(2)\xH^{\prime}\xkRi\xu\xin\nonumber\\
&+\xu\xin^H(\xA^H+\xA)\xu\xin,
\end{align}
where $${\xA\triangleq\xJ\xxki^H\xv\xkl(1)\xv\xkl^H(2)\xH^{\prime}\xkRi},$$
${\xv\xkl(1)}$ and ${\xv\xkl(2)}$ is the \mbox{$M_k\times1$} receive beamforming vector for the $l\yth$ stream of the $k\yth$ user in the first and second time slot, respectively.

The first and second summand in the last equality of \eqref{eqn:171123_1150} is independent of relay filters and can be rewritten in terms of Hermitian matrices.  ${\xA^H+\xA}$ in the third summand is a Hermitian matrix, thus the third summand can be rewritten as ${\tre(2\xu\xin^H\xA\xu\xin)}$, where
\begin{equation}
  \xu\xin^H\xA\xu\xin\!\!=\xu\xin^H\xJ\xxki^H\xv\xkl(1)\xv\xkl^H(2)\sum_{r=1}^R\xG\xxkr\xF_r\xH_{ri}^{\prime\prime}\xu\xin.
\end{equation}
Hence, the signal power from stream $n$ of transmitter $i$ to stream $l$ of receiver $k$ via relay $r$ is given as
\begin{equation}
p\xklrin=2\tre\left(\ttr(\xQ\xklrin\xF_r)\right),
\end{equation}
where $$\xQ\xklrin\triangleq\xH_{ri}^{\prime\prime}\xu\xin\xu\xin^H\xJ\xxki^H\xv\xkl(1)\xv\xkl^H(2)\xG\xxkr$$ is a \mbox{non-Hermitian matrix}. Due to the asymmetry that the direct links bring through the third summands of signal powers in \eqref{eqn:171123_1150}, relay filter design in the existence of direct link is a \mbox{non-trivial} problem. The third
summand is particularly not negligible when the direct channels $\xJ\xxki$ and the effective channels $\xH^{\prime}\xkRi$ are correlated. However, when these channels are independent, the affect of this term is less prominent. By omitting this term, the approximate SINR of the $l\yth$ stream of the $k\yth$ user is obtained as
\begin{equation}\label{eqn:171124_1102}
\widehat{\tSINR}\xkl(\xu\xin)\!\!=\!\frac{\zeta\xkl\ykl(1)\!+\!\zeta\xkl\ykl(2)}{\sum_{(j,m)\neq(k,l)}\!(\zeta\xkl\yjm(1)\!+\!\zeta\xkl\yjm(2))\!+\!\sigma_{n\xkl}^2},
\end{equation}
where
\begin{equation*}
\zeta\xkl\yin(1)\!\!\triangleq\!\big|\xv\xkl^H(1)\xJ\xxki\xu\xin\big|^2, \zeta\xkl\yin(2)\!\!\triangleq\!\big|\xv\xkl^H(2)\xH^{\prime}\xkRi\xu\xin\big|^2, \text{and}
\end{equation*}
  $\sigma_{n\xkl}^2$ is the sum of noises in the first and second time slots that can be obtained from \eqref{eqn:180515_1330}.

Next, \eqref{eqn:171124_1102} can be rewritten as an explicit function of relay filter
\begin{equation}\label{eqn:171212_1640}
\widehat{\tSINR}\xkl(\xf_r)=\frac{\bar{\xf}_r^H\tilde{\xC}_{klrkl}\bar{\xf}_r}{\bar{\xf}_r^H\tilde{\xC}_{klrjm}\bar{\xf}_r}=\frac{\ttr(\xY_r\tilde{\xC}_{klrkl})}{\ttr(\xY_r\tilde{\xC}_{klrjm})},
\end{equation}
where ${\xf_r\triangleq\tvec(\xF_r)},$ $\bar{\xf}_r\triangleq\left[
  \begin{array}{c}
    \xf_r \\
    t \\
  \end{array}
\right]$, ${|t|^2=1}$, and ${\xY_r\triangleq\bar{\xf}_r\bar{\xf}_r^H}$. Further derivation details of \eqref{eqn:171212_1640} and the definition of $\tilde{\xC}_{klrin}$ are given in Appendix \ref{app:SINRApprox}.

\subsection{Total Power Reformulation}
Thus far, the exact SINR \eqref{eqn:SINR} is approximated by \eqref{eqn:171124_1102}, and the approximate SINR is rewritten as a function of relay filter in homogenous quadratic form \eqref{eqn:171212_1640}.
Next, the total power is rewritten as a function of relay filters.

Noting the relay power \eqref{eqn:180321_PC2} and using the equality $\ttr(\xA\xB\xA^H)=\tvec(\xA)^H(\xB\otimes\xI)\tvec(\xA)$, the total power is obtained as\begin{equation}\label{eqn:180829_1430a}
\sum_{r=1}^R\xf_r^H\xD_{r.}\xf_r,
\end{equation}
where
\begin{align*}
&\xD_{r.}\!\!\!\!&\triangleq&\bigg(\sum_{k=1}^K\sum_{l=1}^{d_k}\xD_{rkl}+\sigma_r^2\xI_{N_r}\bigg)^T\!\!\!\otimes\xI_N=\sum_{k=1}^K\sum_{l=1}^{d_k}\xD_{rkl}^\prime,\\
&\xD_{rkl}\!\!\!\!&\triangleq&\xH^{\prime\prime}_{rk}\xu\xkl\xu\xkl^H\xH_{rk}^{\prime\prime H} \text{, and}\\
&\xD_{rkl}^\prime\!\!\!\!&\triangleq&\bigg(\xD_{rkl}+\frac{\sigma_r^2\xI_{N_r}}{B}\bigg)^T\otimes\xI_{N_r}.
\end{align*}
Furthermore, \eqref{eqn:180829_1430a} can be rewritten as
\begin{equation}\label{eqn:180829_1430b}
\sum_{r=1}^R\sum_{k=1}^K\sum_{l=1}^{d_k}\ttr(\xY_r\bar{\xD}_{rkl}^\prime),
\end{equation}
where $$\bar{\xD}_{rkl}^\prime\triangleq\Bigg[
\begin{array}{cc}
\xD_{rkl}^\prime+(\rho_2/2)\xI_{N_r^2} & \xzero_{\!N^2\!-\!1\!\times\!1} \\
\xzero_{1\!\times\!N^2\!-\!1}  & 0
\end{array}
\Bigg]_{N^2+1\times N^2+1}.$$

\subsection{Problem Formulation}
Using \eqref{eqn:171212_1640} and \eqref{eqn:180829_1430b}, the relay beamforming filter design problem is given as

\vspace{6cm}\problemcounter{}\label{op:180829_1252} \vspace{-.5cm}
\begin{subequations}
\begin{align}
&~ \underset{\{\xY_r\}_{\forall r}}\min & & \sum_{r=1}^R\sum_{k=1}^{K}\sum_{l=1}^{d_k}\ttr(\xY_r\bar{\xD}_{rkl}^\prime) \label{op:171226_1527a}\\
&\hspace{.3cm} \text{s.t.} & & \dot{\zeta}_r\ykl+\dot{\zeta}_r^{k,l\, \backprime}=0,\forall k\!\in\!\mathcal{K},\,\forall l\!\in\!\mathcal{L}_k,\forall r\!\in\!\mathcal{R}\label{op:171226_1527b}\\
&\mkern-20mu & &\mkern-30mu\dot{\zeta}_r\ykl\!=\!\frac{1}{\gamma\xkl}\ttr(\xY_r\tilde{\xC}_{klrkl}),\forall k\!\in\!\mathcal{K},\,\forall l\!\in\!\mathcal{L}_k,\forall r\!\in\!\mathcal{R}\label{op:171226_1527c}\\
&\mkern-20mu & &\mkern-30mu\dot{\zeta}_r^{k,l\,\backprime}\!=\!-\ttr(\xY_r\tilde{\xC}_{klrjm}),\forall k\!\in\!\mathcal{K},\,\forall l\!\in\!\mathcal{L}_k,\forall r\!\in\!\mathcal{R}\label{op:171226_1527d}\\
&\mkern-20mu & &\mkern-30mu\ttr(\xY_r\sum_{k=1}^{K}\sum_{l=1}^{d_k}\bar{\xD}_{rkl}^\prime)\leq p_r^\tmax,\forall r\!\in\!\mathcal{R}\label{op:171226_1527e}\\
&\mkern-20mu & &\mkern-30mu\ttr(\xY_r\xTheta)=1,\forall r\!\!\in\!\!\mathcal{R}\\
&\mkern-20mu & &\mkern-30mu\xY_r\!\in\!\mathbb{S}_+^N,\forall r\!\in\!\mathcal{R}\label{op:171226_1527f},
\end{align}
\end{subequations}
where $\xTheta\triangleq\left[
\begin{array}{cc}
\boldsymbol{0} & \boldsymbol{0} \\
\boldsymbol{0} &1\\
\end{array}
\right].$
 The partial augmented Lagrangian is given as
\begin{align}
&\mathfrak{L}_{\dot{\rho}}\big(\{\xY_r,\dot{\zeta}_r\ykl,\dot{\zeta}_r^{k,l\, \backprime},\dot{\lambda}_r\ykl\}_{\forall k,l,r}\big)\nonumber\\
&=\sum_{k=1}^K\sum_{l=1}^{d_k}\sum_{r=1}^R\Big(\ttr(\xY_r\bar{\xD}_{rkl}^\prime)+\dot{\lambda}_r\ykl(\dot{\zeta}_r\ykl+\dot{\zeta}_r^{k,l\, \backprime})\nonumber\\
&\hspace{2.5cm}+\frac{\dot{\rho}}{2}(\dot{\zeta}_r\ykl+\dot{\zeta}_r^{k,l\, \backprime})^2\Big),
\end{align}
where $\dot{\lambda}_r\ykl\in\mathbb{R}$ is the Lagrange multiplier of the constraint \eqref{op:171226_1527b}, and $\dot{\rho}\in\mathbb{R}_{+}$ is the Lagrangian dual update step size. From hereafter, the distributed solution follows the similar steps of the distributed transmit beamforming filter design proposed in Section \ref{subsec:ProposedADMM}.
\subsection{Pseudocode}

The algorithm for the distributed joint solution is obtained by adding two more steps after step 5 of Algorithm \ref{alg:0818_1524}. Similar to the transmitter side optimization,
the first step after step 5 of Algorithm \ref{alg:0818_1524} is obtaining $\{\xY_r\}$. The second step is obtaining the auxiliary variables $\{\dot{\zeta}_r\ykl,\dot{\zeta}_r^{k,l\,\backprime}\}$, and the Lagrangian multipliers $\{\dot{\mu}\xkl\}$ and $\{\dot{\mu}\xkl^{k,l\,\backprime}\}$ for the constraints \eqref{op:171226_1527c} and \eqref{op:171226_1527d}, respectively.

The discussions of distributed transmit beamforming filter design hold for the distributed relay beamforming filter design as well. For instance, CVX optimization and closed-form solutions are utilized
for $\{\xY_r\}$ and $\{\dot{\zeta}_r\ykl,\dot{\zeta}_r^{k,l\,\backprime},\dot{\mu}\xkl,\dot{\mu}\xkl^{k,l\,\backprime}\}$, respectively, and the optimal $\xY_r$ matrices are observed
to be always \mbox{rank-one}. However, in contrast to the transmit side optimization as discussed at the end of Section \ref{subsec:Feasibility}, the relay power constraint \eqref{op:171226_1527e} is better to be incorporated into the CVX optimization, and the transmit power constraint \eqref{eqn:180829_1906}
at the transmitter side as well, since the distributed joint algorithm is significantly more costly in computations due to
the distributed relay filter optimization
 as shown in Section \ref{subsec:DistributedJoint}.

\section{Numerical Results} \label{sec:NumericalResults}

In this section, the performance of the proposed scheme is compared with the centralized, ADMM-BG, and ADAL solutions by cross-varying the network parameters. All distributed algorithms in this section are optimal, thus they achieve the same SINR targets and also have the same total power consumptions that are achieved by the centralized solution. The network parameters to be cross-varied are the number of transmit ($M$) and relay ($N$) antennas, the number of relays ($R$), and the transmit ($\tSNR_\tT$) and relay ($\tSNR_\tR$) SNRs.

\subsection{Simulation Settings}

In our simulations, the distances between transmitter and relay sides, and relay and receiver sides are same and fixed to 1 km. The shadowing coefficient follows log-normal distribution with zero mean and standard deviation equal to $8$. The power of receive filter $\bar{\xv}\xkl$ is normalized to 1 by equal power allocation over the receive filters $\xv\xkl(i),\,i=1,2$, i.e., the receive filter power for each time slot is normalized to $\sqrt{0.5}$. The transmitters and relays have the same transmit power constraints among themselves, i.e., $p_k^{\tmax}=p_\text{T}^{\tmax}$, $\forall k \in \mathcal{K}$ and $p_r^{\tmax}=p_\text{R}^{\tmax}$, $\forall r \in \mathcal{R}$, where ${\mathcal{R}\triangleq\{1,2,\ldots,R\}}$ is the set of relays, but $p_\text{T}^{\tmax}$ and $p_\text{R}^{\tmax}$ are not necessarily equal. Hence, the SNR at the transmitter and relay side can be defined as
\begin{equation}\label{eqn:MaximumPowers}
\tSNR_\text{T}=\frac{p_\text{T}^\tmax}{\sigma^2} \text{ and } \tSNR_\text{R}=\frac{p_\text{R}^\tmax}{\sigma^2},
\end{equation}
respectively, and $\sigma_k^2=\sigma^2,\,\forall k\in\mathcal{K}$ is assumed for simplicity. Without loss of generality, the streams of a user are assumed to have the same SINR target, which is  equal to the average SINR of the user, i.e., the average of user's stream SINRs. For both the proposed algorithm and \mbox{ADMM-BG}, the step sizes $\rho$ and $\rho_c$ are empirically tuned to $1.2$ and $0.5$, respectively. For ADAL, the step sizes $\rho$, $\rho_c$, and $\tau$ are tuned to $9$, $0.5$, and $0.3$, respectively. The step sizes are kept constant for the simulations. If $|\tSINR\xkl-\gamma\xkl|\leq\Delta\xkl^\tmax=10^{-4}$ is met by the stream, the stream is regarded as achieving the SINR target.

Finally, for the benchmarks of the proposed algorithm and centralized solution, $100$ Monte Carlo channels are tested. Due to the slow convergence rates of ADMM-BG and ADAL, $20$ Monte Carlo channels are tested for their benchmarks with the proposed algorithm.

\subsection{Some Notes}

For all simulations, random initializations of filters and channels are pre-saved in files to feed the same inputs to the algorithms. Thus, reliable conclusions with less number of channel tests can be reached. Especially for large network sizes, this approach significantly reduces the simulation durations. Therefore, the networks with many varying parameters are tested in reasonable time durations by the use of pre-saved inputs.

The numerical results in this section are presented for ${K\!=\!3}$ networks since cross-varying $K$ as well significantly increases the data sets to be analyzed. In all tests, each user is assumed to have \mbox{$d_k=2$} streams. Thus, a total of ${B=6}$ streams are transmitted in all network configurations. Inline with the objective function of P\ref{op:0720_1620}, the total power curves are plotted in this section. On the other hand, although  QoS assurance of P\ref{op:0720_1620} imposes an SINR target for each of the
 streams, a  \mbox{single} curve, sum-SINR, is plotted instead of plotting $6$ individual stream SINR curves for the sake of clarity of the figures.

 For all simulations of the proposed distributed algorithm presented in this section excluding the Section \ref{subsec:SINRtargets}, the percentage of all
cases including slow-, fluctuant-, and non-convergent, i.e., infeasible SINR targets, makes up in total $1.93\%$. We note that in Section VI-F, we propose the linear search method to determine higher feasible SINR targets than those are set by random filter initializations. The next iteration of linear search method is proceeded depending on whether the result of the previous iteration is feasible or infeasible.

The conclusions drawn in this section by cross-varying these parameters are not straightforward, e.g., increasing $R$ can assist in achieving the SINR targets while also increasing the power consumption. Moreover, as discussed later, the ratios of these parameters also significantly affect the outputs, e.g., the ratios of $\tSNR_\tT$ and $\tSNR_\tR$ have different effects on the outputs.

\subsection{Proposed ADMM vs. Centralized Solution (Figs. \ref{Fig:TPSSINvsNS} -- \ref{Fig:INvsNS_12424212})}
\subsubsection{Performances over Network Sizes and SNRs (Fig. \ref{Fig:TPSSINvsNS})}
In Figs. \ref{Fig:TPvsNS}, \ref{Fig:SSvsNS}, and \ref{Fig:INvsNS_12122121}, the numerical results of average total power, sum-SINR, and iteration number vs. $M$ and $R$ are plotted, respectively.

As seen in Fig. \ref{Fig:TPvsNS}, the proposed algorithm achieves the optimal centralized algorithm solutions, i.e., they have the same power consumptions. As mentioned earlier, all algorithms achieve the given SINR targets as seen in Fig. \ref{Fig:SSvsNS}, and the optimality of an algorithm is determined whether it has the same power consumption with the centralized solution or not. In all cases, the numerical sum-SINR results in Section \ref{sec:NumericalResults} equal to the sum of the given feasible SINR targets of each stream, which are determined by random initializations of beamforming vectors.

As seen in Figs. \ref{Fig:TPvsNS} and \ref{Fig:SSvsNS}, the network $\mIVnVIIIrIX$ has a higher sum-SINR and a lower total power consumption than $\mIIInVIIIrX$, respectively. Since the optimization problem, power minimization while achieving the SINR targets, is solved at the transmitter side, providing more resources, i.e., more antennas, to the transmitters returns better results, i.e., higher SINRs are achieved with lower power consumptions. As expected, the sum-SINRs of the  networks $\mXnVIIIrIII$, $\mXnVIIIrIV$, and $\mXVnVIIIrX$ gradually increase as the total numbers of relays in the networks increase. However, their total power consumptions remain similar because although $R$ is increased, $M$ is chosen sufficiently large for these networks.

As seen in Fig. \ref{Fig:INvsNS_12122121}, the iteration number increases when the SNR increases. When the transmitter and relay SNRs are $12$ (${\tSNR_\tT\!=\tSNR_\tR\!=\!12}$),
the iteration numbers are similar around 20$\pm$1. Thus we focus on the ${\tSNR_\tT\!=\tSNR_\tR\!=\!21}$ results. In our problem setting, where only transmit beamforming vectors are optimized, increasing $M$ and decreasing $R$, decreases the iteration number. Lower $R$ yields lower SINR, in other words, the range of numbers in consideration is lower, thus the iteration number is lower. Although higher $M$ increases SINR, the iteration number decreases because, again, providing more resources, i.e., more antennas, to the transmitters returns better results, i.e., less number of iterations. The algorithm converges faster because the degree of freedom is richer due to more resources, variables, available for the optimization of transmit beamforming vectors.

\begin{figure}[!t]
\centering
\begin{subfigure}[t]{0.5\textwidth}
\centering
  \includegraphics[height=1.8in, width=3.55in] {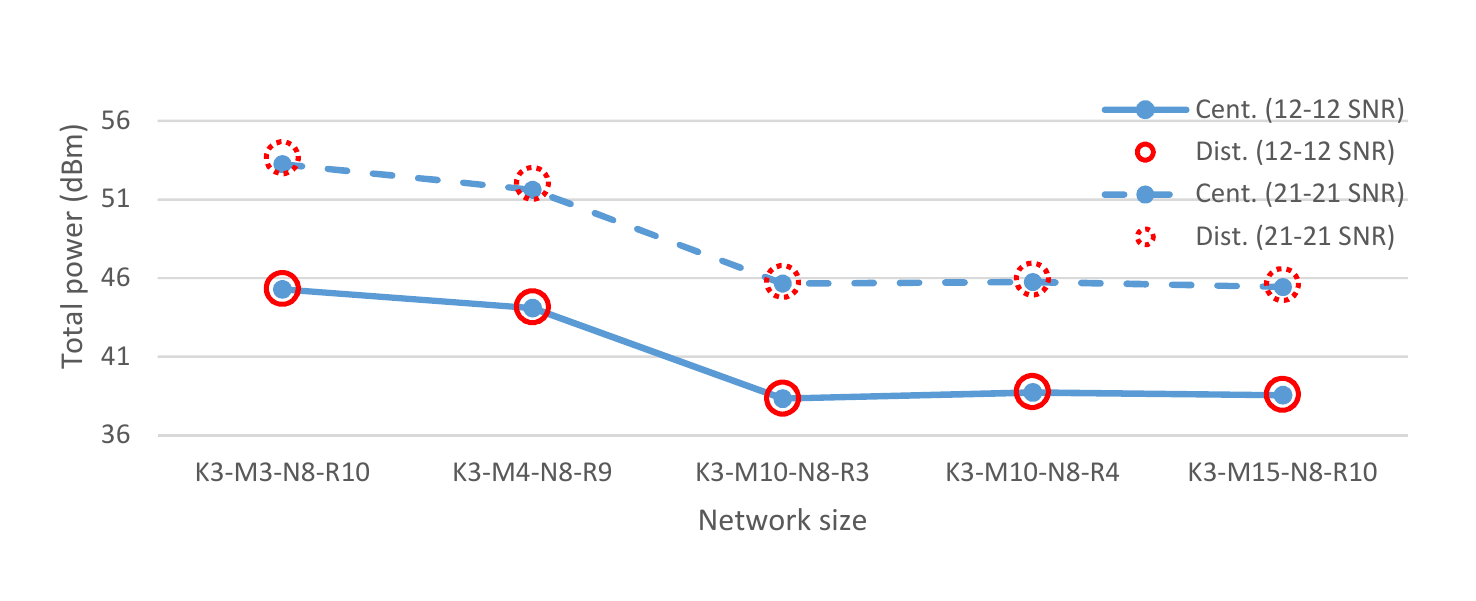}
  \caption{Average total power consumption vs. network sizes and SNRs.}
  \label{Fig:TPvsNS}
\end{subfigure}

\begin{subfigure}[t]{0.5\textwidth}
\centering
  \includegraphics[height=1.8in, width=3.55in] {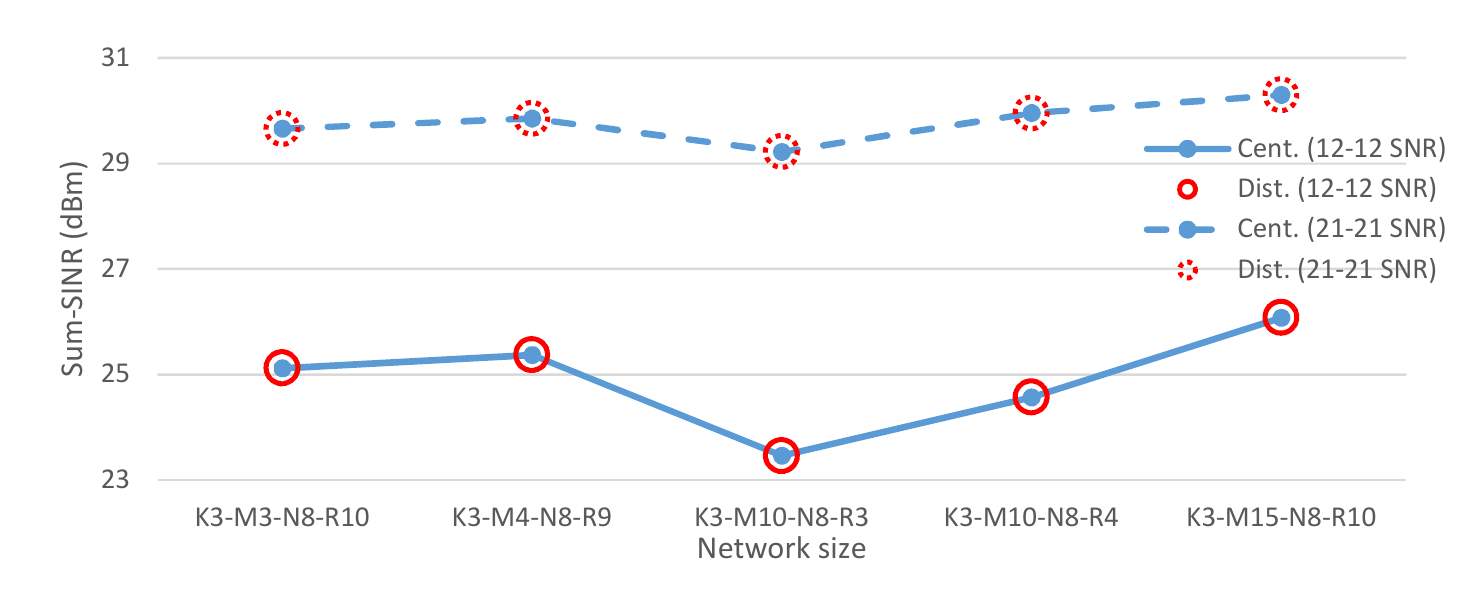}
  \caption{Average sum-SINR vs. network sizes and SNRs.}
  \label{Fig:SSvsNS}
\end{subfigure}

\begin{subfigure}[t]{0.5\textwidth}
 \centering
 \includegraphics[height=1.8in, width=3.55in] {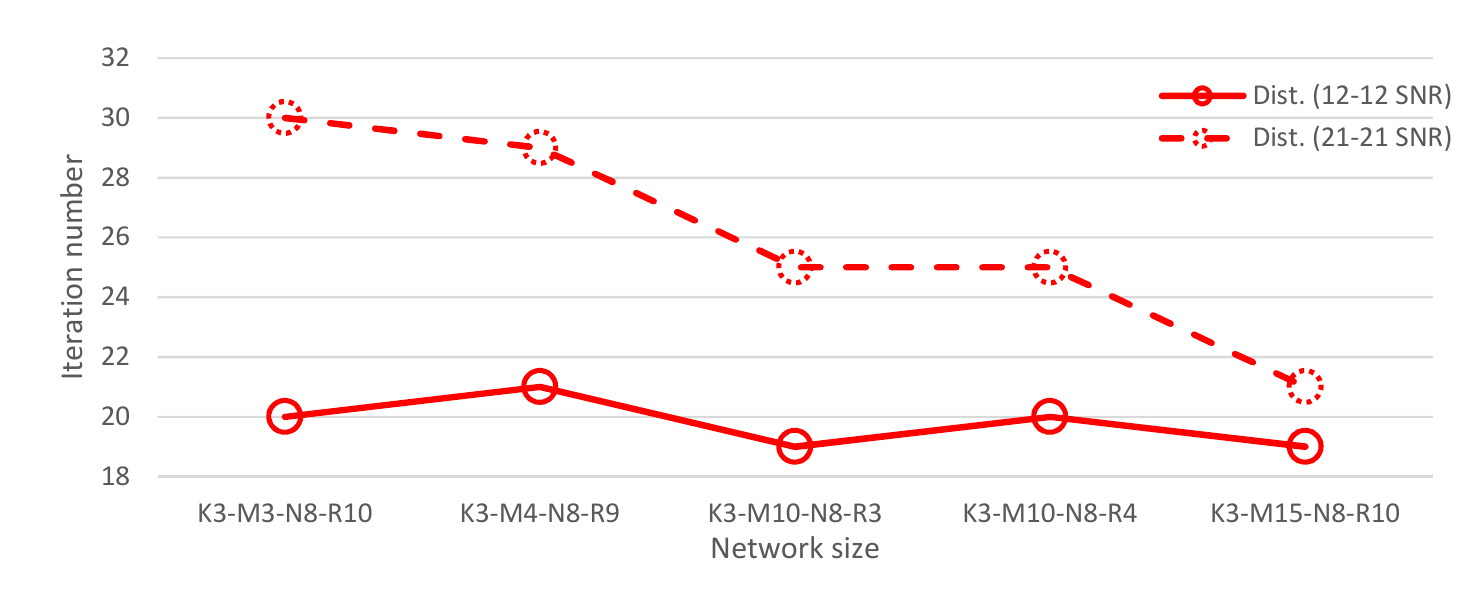}
  \caption{Average total number of iterations of the proposed distributed algorithm.}
  \label{Fig:INvsNS_12122121}
\end{subfigure}
\caption{Average total power consumptions, sum-SINRs, and total number of iterations vs. network sizes and SNRs. }
\label{Fig:TPSSINvsNS}
 \end{figure}

\subsubsection{Performances over a Wider Range of SNRs for a Particular Network Size (Fig. \ref{Fig:TPSSINvsSNR})}
In Figs. \ref{Fig:TPvsSNR}, \ref{Fig:SSvsSNR}, and \ref{Fig:INvsSNR}, the numerical results of average total power, sum-SINR and iteration number vs. transmit and relay SNRs for the $\mXnVIIIrIII$ network are presented, respectively. Again, the proposed algorithm achieves the optimal centralized solutions as seen in Fig. \ref{Fig:TPvsSNR}. In general, increasing SNR increases the power consumption and also sum-SINR. However, consider the blue trend lines in Figs. \ref{Fig:TPvsSNR} and \ref{Fig:SSvsSNR}. When $\tSNR_\tT$ is sufficiently large, e.g., $\tSNR_\tT\!=\!42$, increasing $\tSNR_\tR$ can have a small effect on sum-SINR as seen in the $42\tm12$, $42\tm21$, and $42\tm42$ dB results of Fig. \ref{Fig:SSvsSNR}. Next, consider the green trend lines in Figs. \ref{Fig:TPvsSNR} and \ref{Fig:SSvsSNR}. When $\tSNR_\tR$ is sufficiently large, increasing $\tSNR_\tT$ is advantageous as seen in $12\tm42$, $21\tm42$, and $42\tm42$ dB results of Figs. \ref{Fig:TPvsSNR} and \ref{Fig:SSvsSNR}. $\tSNR_\tR=42$ is sufficiently large to assist in achieving higher SINR targets, and providing more resources to the transmitter side returns better results, i.e., higher SINRs are achieved with lower power consumptions. On the other hand, $\tSNR_\tR=12 \text{ and } 21$ values are not sufficiently large. Thus, both the total power consumption and sum-SINR are increased by the increasing $\tSNR_\tT$, e.g., consider the $12\tm21$, $21\tm21$, and $42\tm21$ dB results.

In Figs. \ref{Fig:TPvsSNR} and \ref{Fig:SSvsSNR}, the numerical results for another benchmark that assumes the direct links between transmitters and receivers do not exist are presented. When $\tSNR_\tT$ is high, more power is consumed while less sum-SINR is achieved as seen in the $42\tm12$ and $42\tm21$ dB results. However, when $\tSNR_\tR$ is high, the effect of nonexisting direct links vanishes as seen in the $21\tm42$ and $42\tm42$ dB results.

As seen in Fig. \ref{Fig:INvsSNR}, in general, the iteration number increases as SNR increases, by increasing both $\tSNR_\tT$ and $\tSNR_\tR$, e.g., $12\tm12$, $21\tm21$, and $42\tm42$ dB, and by only increasing $\tSNR_\tT$, e.g., $12\tm12$, $21\tm12$, and $42\tm12$ dB. However, when only $\tSNR_\tR$ is increased and reaches up to $42$ dB, the iteration number decreases, i.e., compare $12\tm12$ and $12\tm21$ with $12\tm42$, and compare $21\tm12$ and $21\tm21$ with $21\tm42$ dB results. As seen in Fig. \ref{Fig:TPvsSNR}, there are notable peaks in power consumptions at the $12\tm42$ and $21\tm42$ dB points. The disproportionately high $\tSNR_\tR$ value with respect to $\tSNR_\tT$ helps in rapidly achieving the SINR targets, which cannot be controlled by the transmitter side, before the transmit beamforming optimization can further reduce the total power consumption. When the marginal ${\tSNR_\tT=42}$ results are compared, as explained earlier, similar sum-SINR results are achieved for these dB points while total power consumptions are reduced from left to right as displayed in Fig. \ref{Fig:TPvsSNR}. As mentioned earlier, smaller range of numbers of interest, i.e., smaller power consumption values, decreases the iteration number. Thus the iteration number decreases gradually in the last three dB points of Fig. \ref{Fig:INvsSNR}.

The total power savings can be obtained from the figures, i.e., total power saving (dB)${=(K p_\text{T}^{\tmax}+R p_\text{R}^{\tmax})}$ (dB)$-$(\mbox{y-axis} value) (dB). As seen in the results, e.g., Figs. \ref{Fig:TPvsNS} and \ref{Fig:TPvsSNR}, the power savings are large as explained next. Due to the random initializations of beamforming vectors, the initial power budgets are likely to be highly redundant to achieve the SINR targets determined by these beamforming vectors. On the other hand, if \mbox{max-SINR} filters are used for the initializations instead of random initializations to determine the SINR targets, the SINR targets are likely to be infeasible.

\subsubsection{Supporting Results for Disproportionate SNRs over Network Sizes (Fig. \ref{Fig:INvsNS_12424212})}
In Fig. \ref{Fig:INvsNS_12424212}, the early convergence of the algorithm when $\tSNR_\tR$ is disproportionately high is shown again over different network sizes. Since the algorithm achieves the SINR targets rapidly due to the high $\tSNR_\tR$, the transmitter side lacks the opportunity to further reduce the power consumption similar to the case observed in Fig. \ref{Fig:TPSSINvsSNR}.

 \begin{figure}[!t]
 \centering
 \begin{subfigure}[t]{0.5\textwidth}
 \centering
\hspace{-1cm} \includegraphics[height=2.1in, width=3.5in] {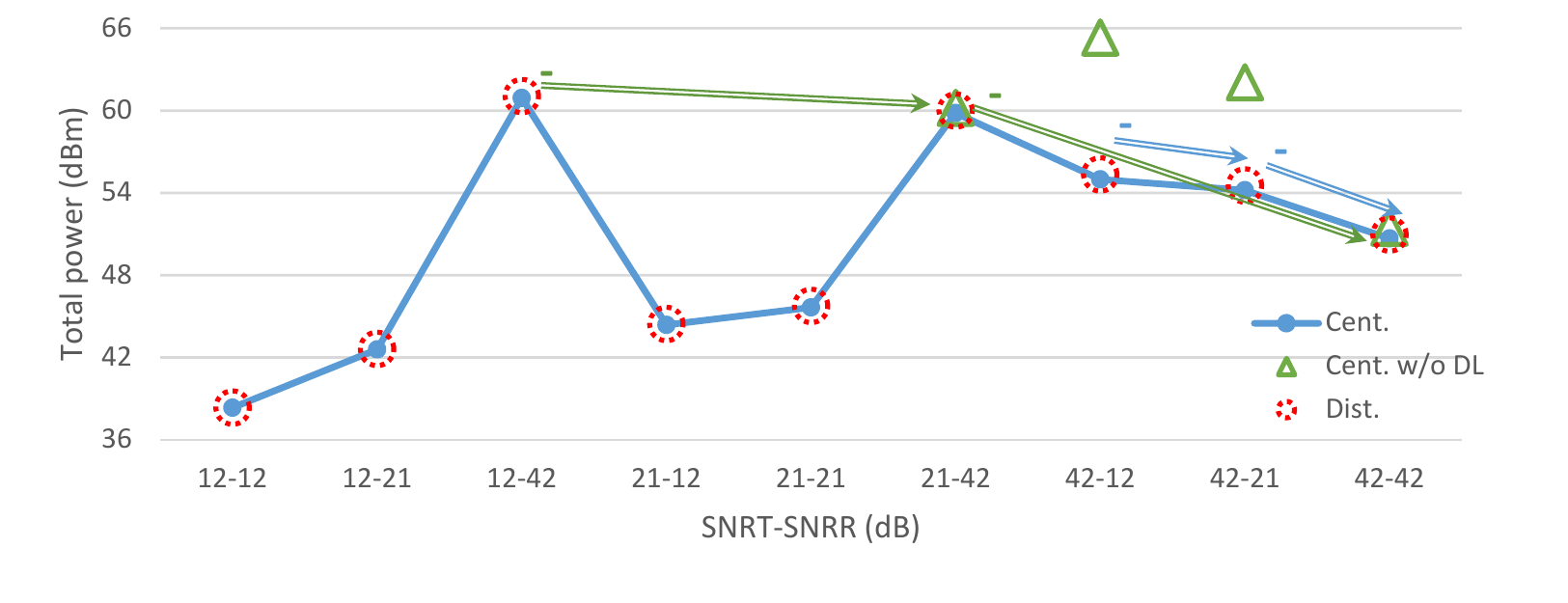}
   \caption{Average total power consumption vs. SNRs. }
  \label{Fig:TPvsSNR}
\end{subfigure}

 \begin{subfigure}[t]{0.5\textwidth}
 \centering
\hspace{-1cm} \includegraphics[height=2in, width=3.5in] {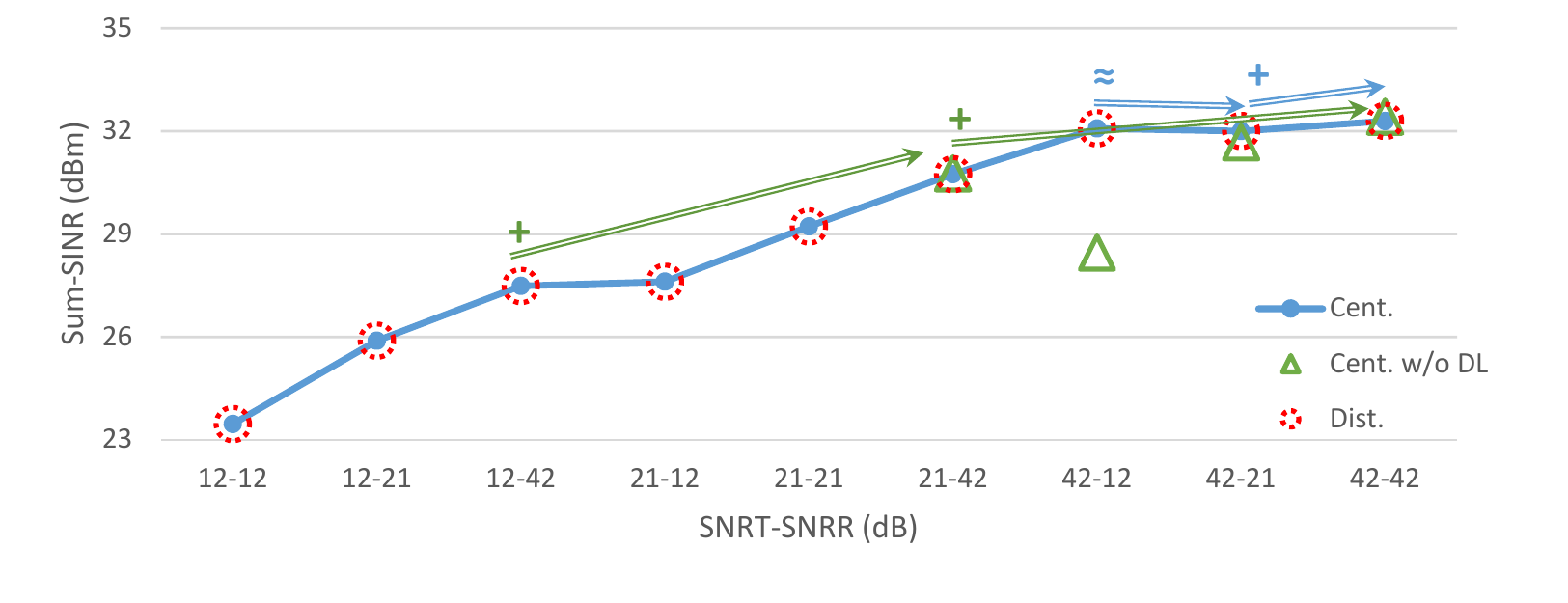}
  \caption{Average sum-SINR vs. SNRs. }
  \label{Fig:SSvsSNR}
 \end{subfigure}

\begin{subfigure}[t]{0.5\textwidth}
 \centering
 \includegraphics[height=1.8in, width=3.5in] {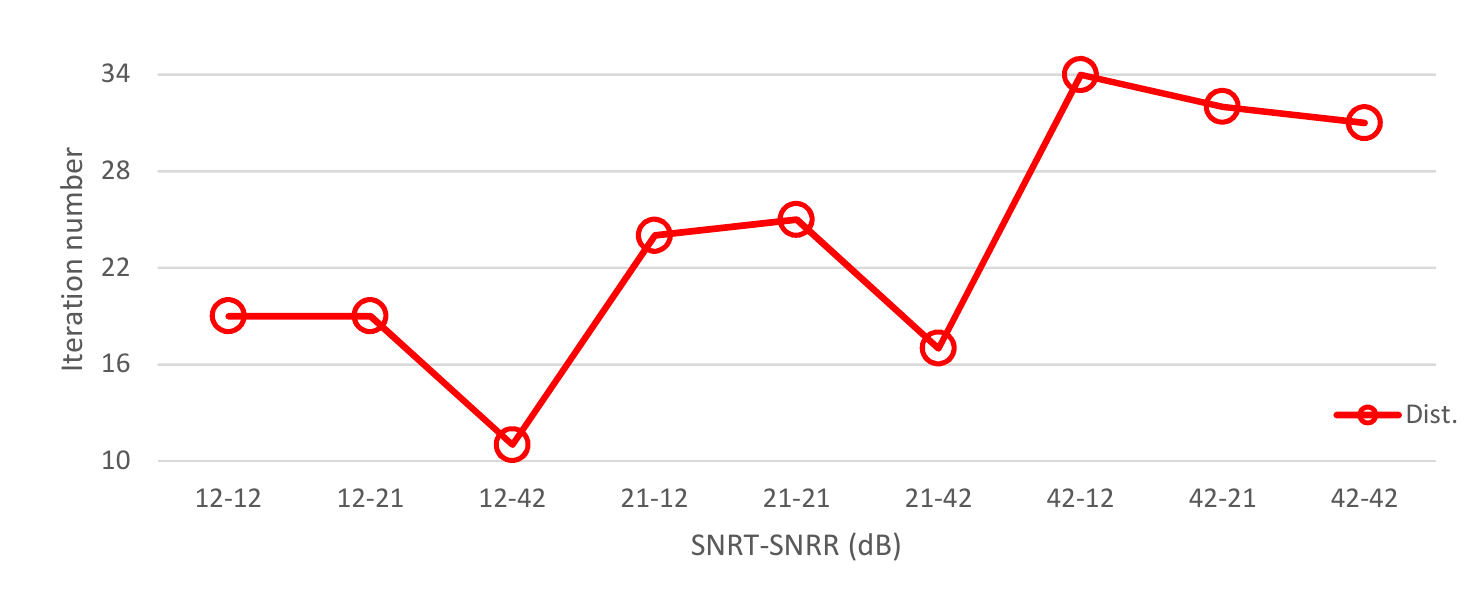}
  \caption{Average total number of iterations vs. SNRs of the proposed distributed algorithm.}
  \label{Fig:INvsSNR}
\end{subfigure}

  \caption{Average total power consumption, sum-SINR, and total number of iterations vs. SNRs in the $\mXnVIIIrIII$ network configuration.
Green lines with arrows indicate the trends.}
  \label{Fig:TPSSINvsSNR}
\end{figure}

\begin{figure}[!t]
 \centering
 \includegraphics[height=1.8in, width=3.55in] {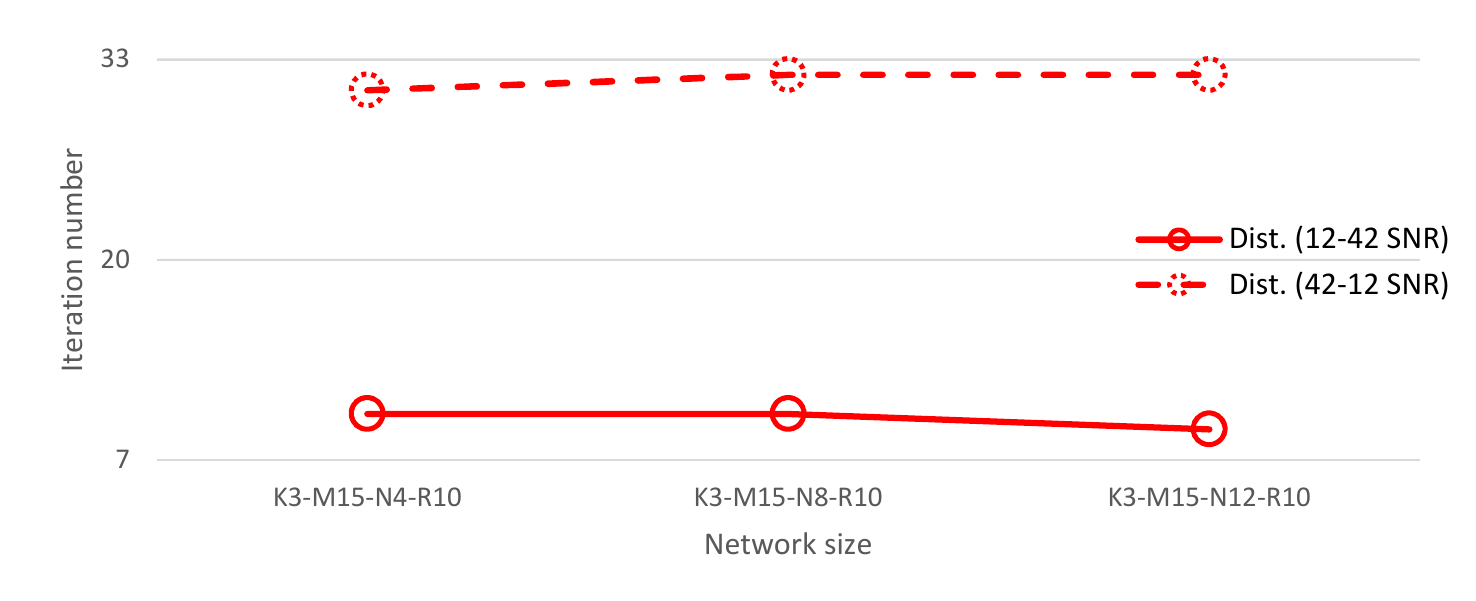}
  \caption{Average total number of iterations vs. network sizes and SNRs of the proposed distributed algorithm. }
  \label{Fig:INvsNS_12424212}
\end{figure}
  \begin{figure}[!t]
 \centering
 \includegraphics[height=1.5in, width=3.6in] {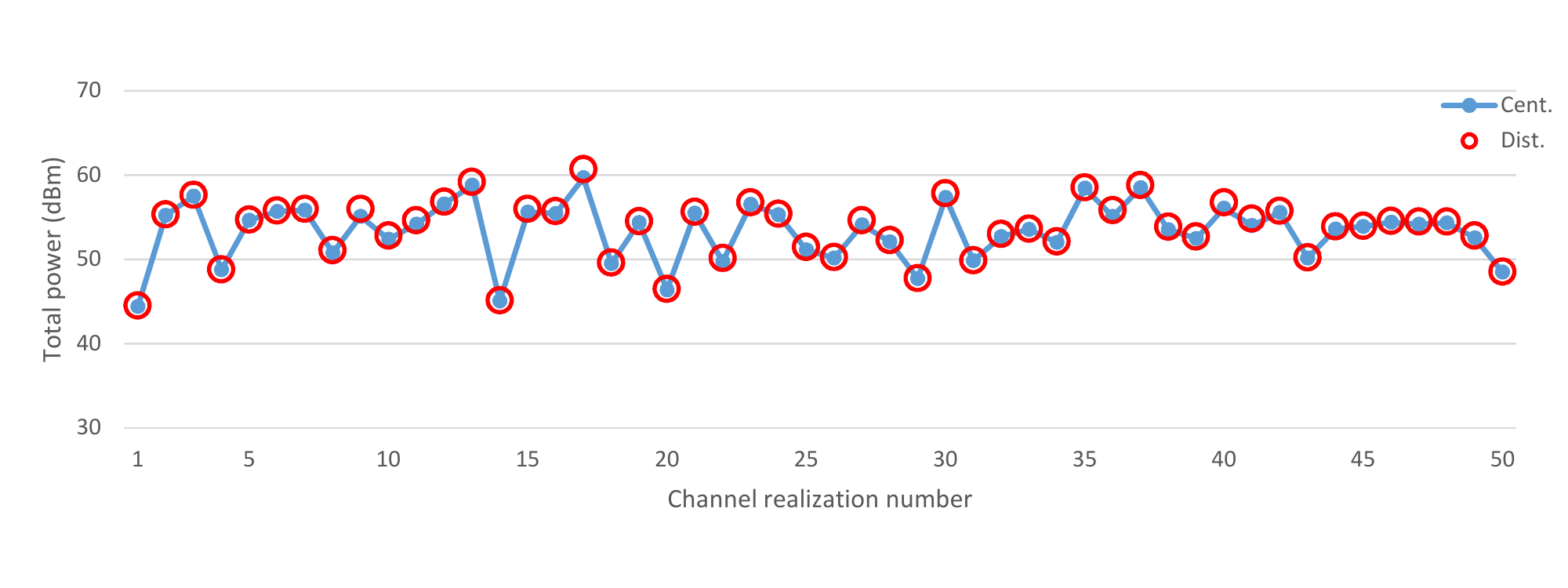}
  \caption{Total power consumption vs. channel realizations of the centralized and proposed distributed algorithms in the $\mXnVIIIrIII$ network configuration at $\tSNR_\tT=42$ and $\tSNR_\tR=12$ dB.}
  \label{Fig:TPvsCRN}
\end{figure}
 \begin{figure}[!t]
 \centering
  \begin{subfigure}[h]{0.5\textwidth}
 \centering
 \includegraphics[height=1.5in, width=3.65in] {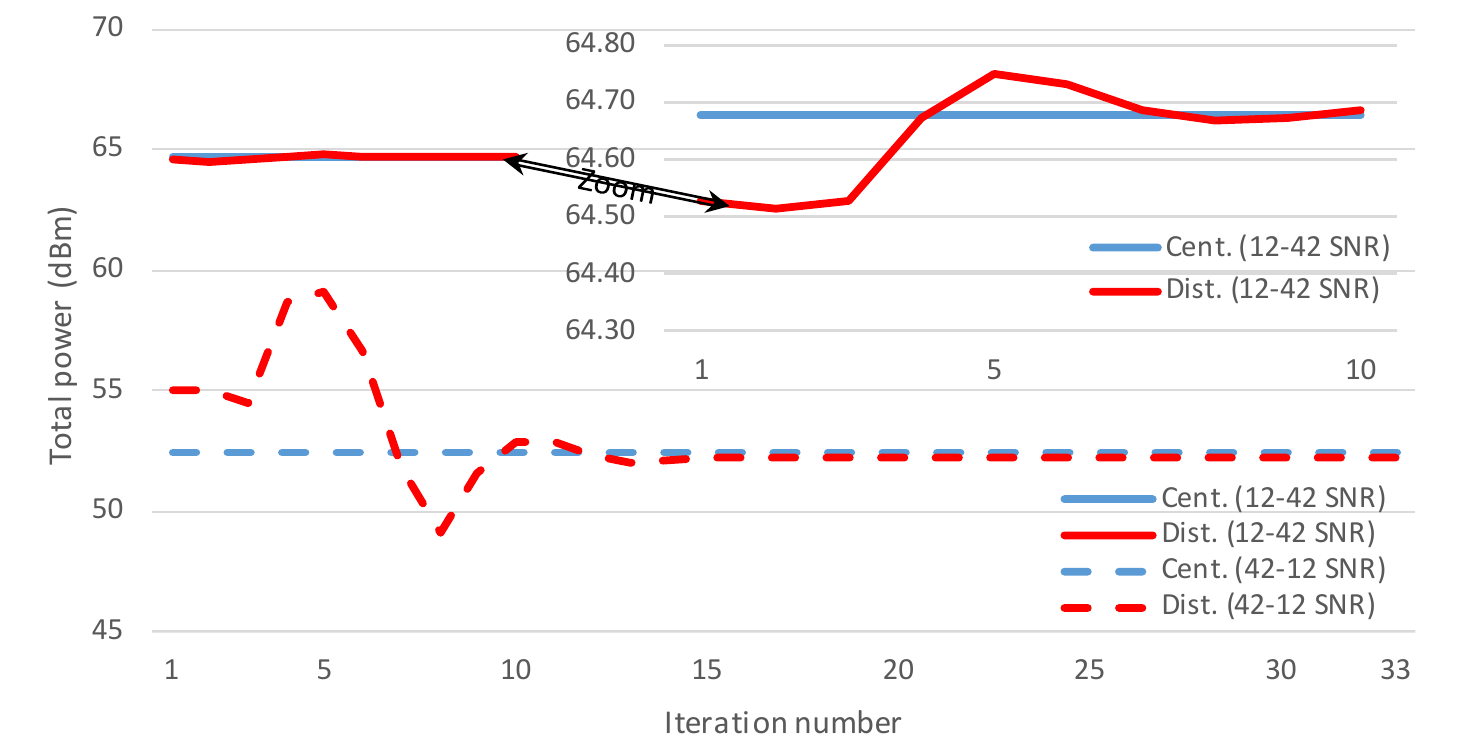}
 \caption{Total power consumption vs. iteration number.}
  \label{Fig:TPvsIN}
\end{subfigure}

 \begin{subfigure}[h]{0.5\textwidth}
 \centering
 \includegraphics[height=1.5in, width=3.5in] {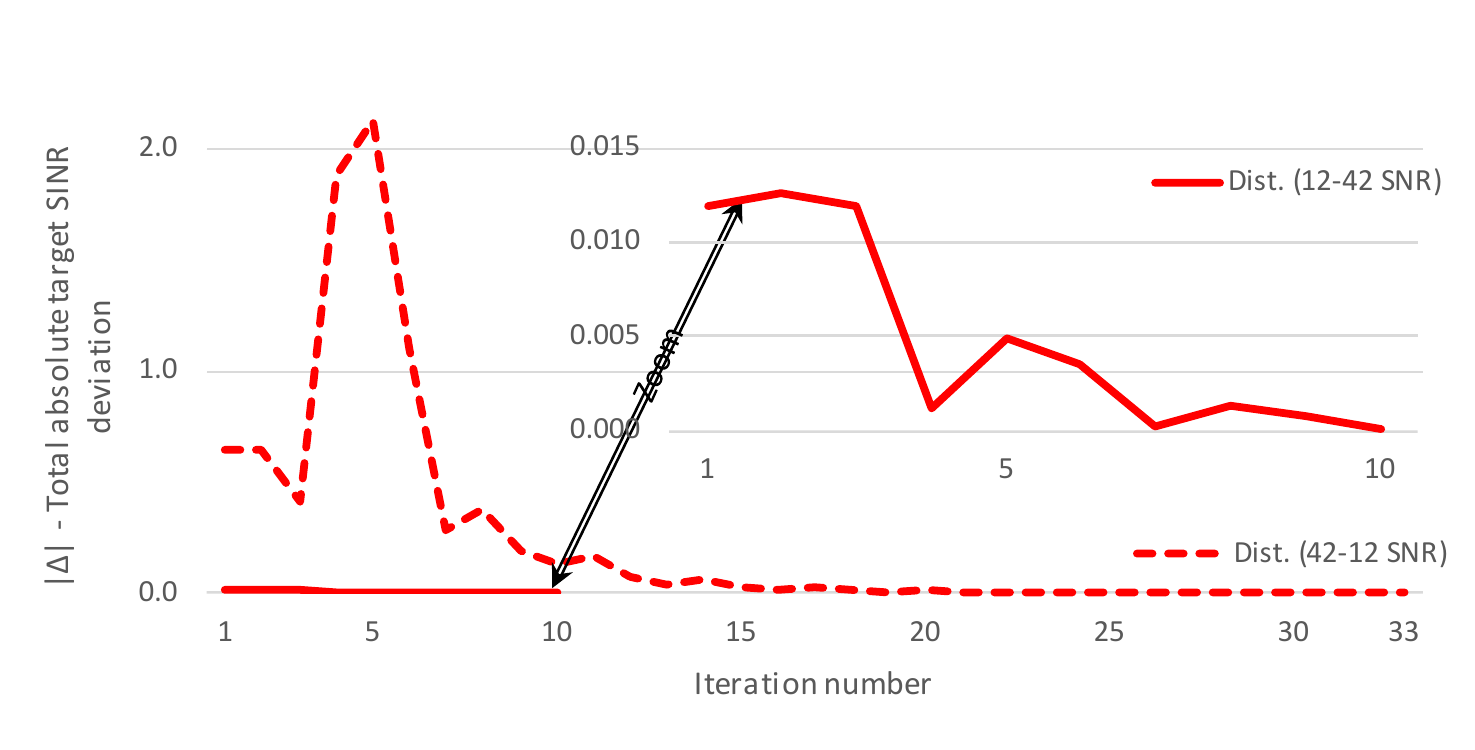}
  \caption{Total absolute target SINR deviation vs. iteration number.}
  \label{Fig:DeltavsIN}
\end{subfigure}

\caption{Total power consumption and absolute target SINR deviation vs. iteration and SNRs in the $\mXVnVIIIrX$ network configuration. }
\label{Fig:TPDeltavsIN}
\end{figure}

\subsubsection{Scrutinized Performances over Multiple and Single Channel Realizations (Figs. \ref{Fig:TPvsCRN} and \ref{Fig:TPDeltavsIN})}
In Fig. \ref{Fig:TPvsCRN}, total power consumptions vs. channel realizations are presented  to demonstrate that the proposed algorithm achieves the optimal centralized solutions at each channel realization. In Figs. \ref{Fig:TPvsIN} and \ref{Fig:DeltavsIN}, the total power consumptions and total absolute target SINR deviations, $|\Delta|\triangleq|\sum_{k,l}^{K,d_k}\Delta\xkl|$ vs. iteration numbers are presented for a randomly selected channel realization.

\subsection{Proposed ADMM vs. Existing Distributed Algorithms (Table \ref{tab:180329_1450}, Figs. \ref{Fig:ADMMBGvsProposed} and \ref{Fig:ADALvsProposed_IN})}\label{subsec:vsExisting}

The average CPU times for the aforementioned distributed algorithms in the $\mVInVIIIrXIV$ network at ${\tSNR_\tT\!=\!\tSNR_\tR\!=\!21}$ dB are shown in Table \ref{tab:180329_1450}, based on a desktop computer with 64-bit operating system, Intel i7, CPU 3.40 GHz, and 16 GB RAM. A similar comparative trend is exhibited for other network configurations in our extensive experiments. The numerical results in Table \ref{tab:180329_1450} match with the analytical results obtained in Section \ref{subsec:CompComplexity}.

As seen in Fig. \ref{Fig:ADMMBGvsProposed}, the iteration number of the proposed algorithm is significantly lower than ADMM-BG.  The iteration numbers of the two algorithms are more distinct when disproportionate resources are allocated at the transmitter side, i.e., higher $M$ and $\tSNR_\tT$.
As expected, both algorithms achieve the SINR targets and the optimal centralized solutions, which are not plotted to avoid duplicate results.

In Fig. \ref{Fig:ADALvsProposed_IN}, the  iteration numbers of the proposed and ADAL algorithms are compared, respectively. Each network configuration is tested at three different SNRs,  particularly, from left to right at $12\tm12$, $21\tm21$, and $42\tm42$ dBs. For the simplicity of figures, SNRs are not noted. As seen in Fig. \ref{Fig:ADALvsProposed_IN}, due to the significantly higher number of constraints in ADAL than the proposed algorithm, the iteration number of our proposed algorithm is always significantly lower than ADAL.
\begin{table}[!t]
  \begin{center}
  \caption{AVERAGE CPU TIMES OF THE PROPOSED AND DISTRIBUTED ALGORITHMS FOR THE $\mVInVIIIrXIV$ NETWORK.}\label{tab:180329_1450}\vspace{-.2cm}
    \begin{tabular}{|c|c|c|c|}
      \hline
      Algorithm & Proposed & ADMM-BG & ADAL \\ \hline
      Average CPU time (sec.) & 1.48 & 1.98 & 2.13 \\
      \hline
    \end{tabular}
  \end{center}
\end{table}

  \begin{figure}[!t]
 \centering
 \includegraphics[height=1.5in, width=3.5in] {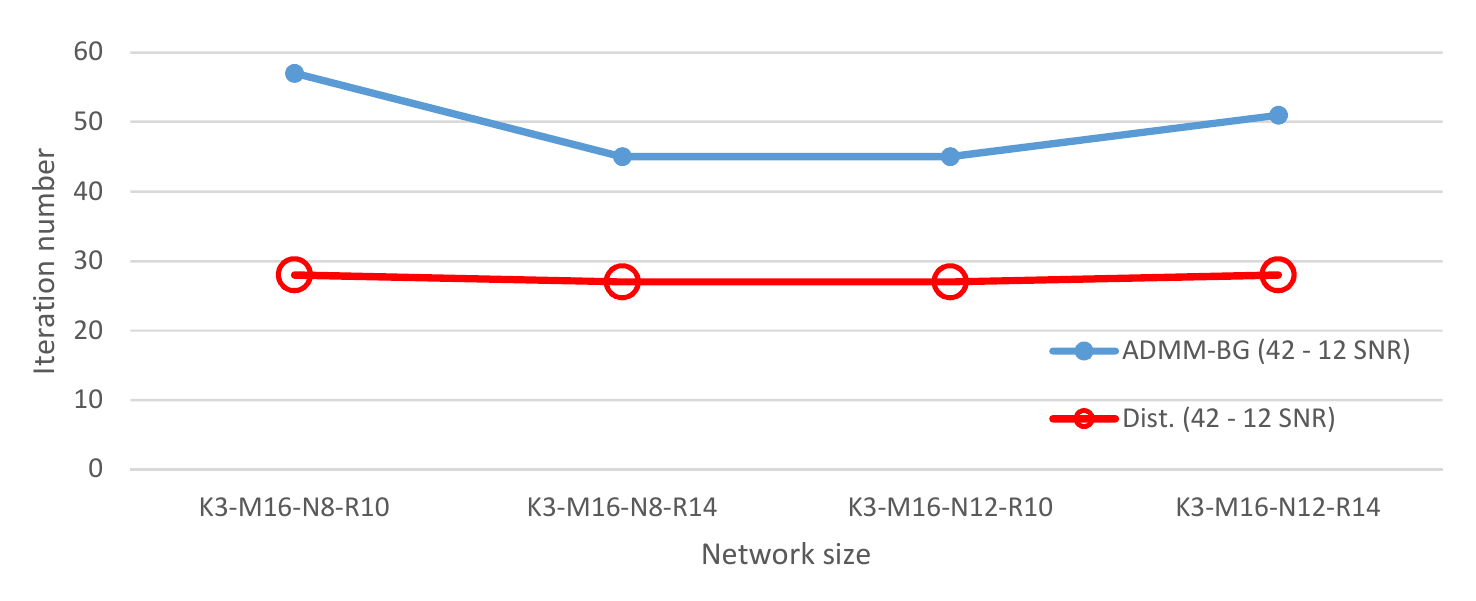}
  \caption{Average total number of iterations vs. network sizes.}
  \label{Fig:ADMMBGvsProposed}
\end{figure}

  \begin{figure}[!th]
 \centering
 \includegraphics[height=1.5in, width=3.5in] {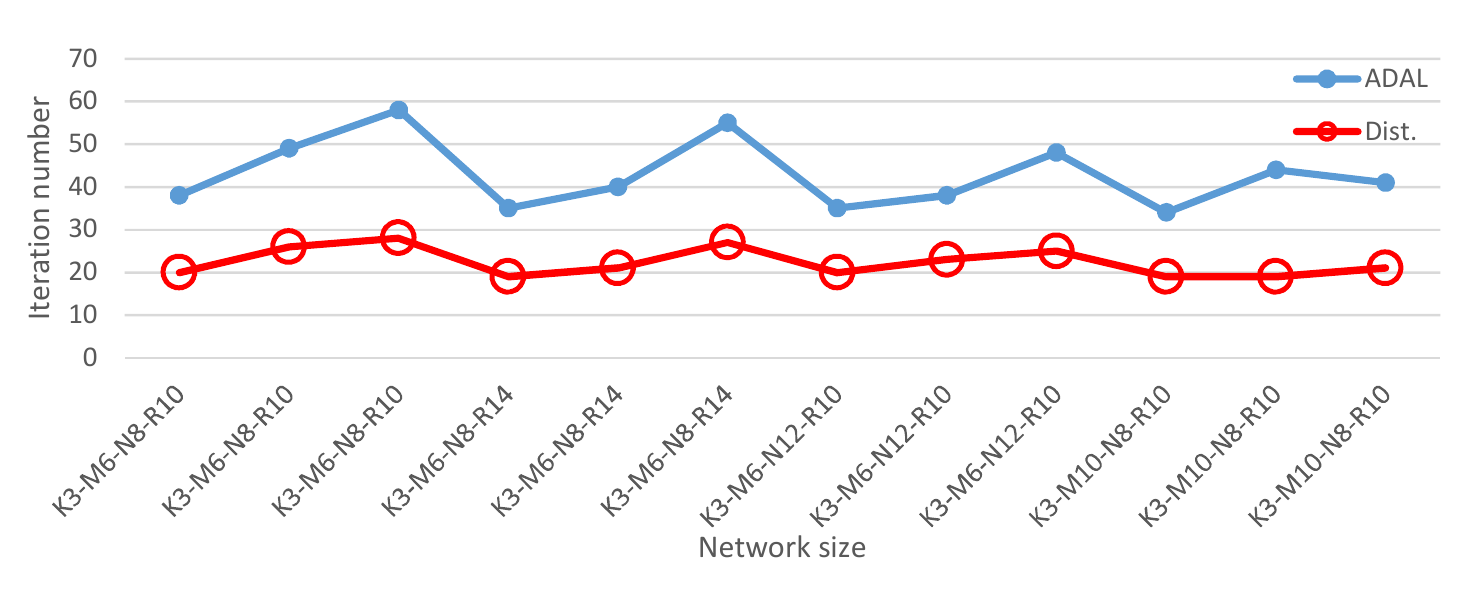}
\caption{Average total number of iterations vs. network sizes and SNRs. Each network configuration is tested three times at three different SNRs.}
  \label{Fig:ADALvsProposed_IN}
\end{figure}

\subsection{BER Performance of Proposed ADMM (Fig. \ref{Fig:BERimprovement})}\label{subsec:BER}
The important application areas of the SINR apparatus are deriving the closed-form BER and outage expressions \cite{738,739}. Therefore, as mentioned earlier, stream SINR is interconnected with the BER metric. Contrary to the rate results, the BER results cannot be obtained from the numerical SINR results presented in this section. In general, BER and outage performances improve as SINR improves, e.g., SINR outage is the probability of SINR descending below a preset SINR target. To illustrate this general trend, the \mbox{sum-SINR} and the corresponding BER improvement percentages of the system are demonstrated in Fig. \ref{Fig:BERimprovement}. For instance, at $\tSNR_\tR=21$ dB, BER is improved, i.e.,
decreased, by $1.59\%$ by increasing the $\tSNR_\tT$ from $6$ to $12$ dB, respectively. At $\tSNR_\tR=12$ dB, $0.37\%$ BER improvement is observed by increasing the $\tSNR_\tT$ from $3$ to $6$ dB.

The results in Fig. \ref{Fig:BERimprovement} are obtained over $300$ Monte Carlo channels and uncoded QPSK modulation is used. For the earlier numerical results, the extraction of  beamforming vectors $\xu\xkl$ from $\xX\xkl$ is not needed. However, for BER simulations, $\xu\xkl$ needs to and can be extracted by rank-one decomposition \cite{728} since the ranks of $\xX\xkl$ matrices are observed to be always $1$ \cite{750,751}. Based on this observation, the proposed SDR solution is also the global optimal solution to P\ref{op:0720_1620}.

Contrarily, the maximum transmit powers $\tSNR_\tT$ and $\tSNR_\tR$ can be fixed and  higher
SINRs can still be achieved by searching for competitive SINR targets as demonstrated in the next section.

\begin{figure}[!t]
 \centering
 \includegraphics[height=1.5in, width=3.55in] {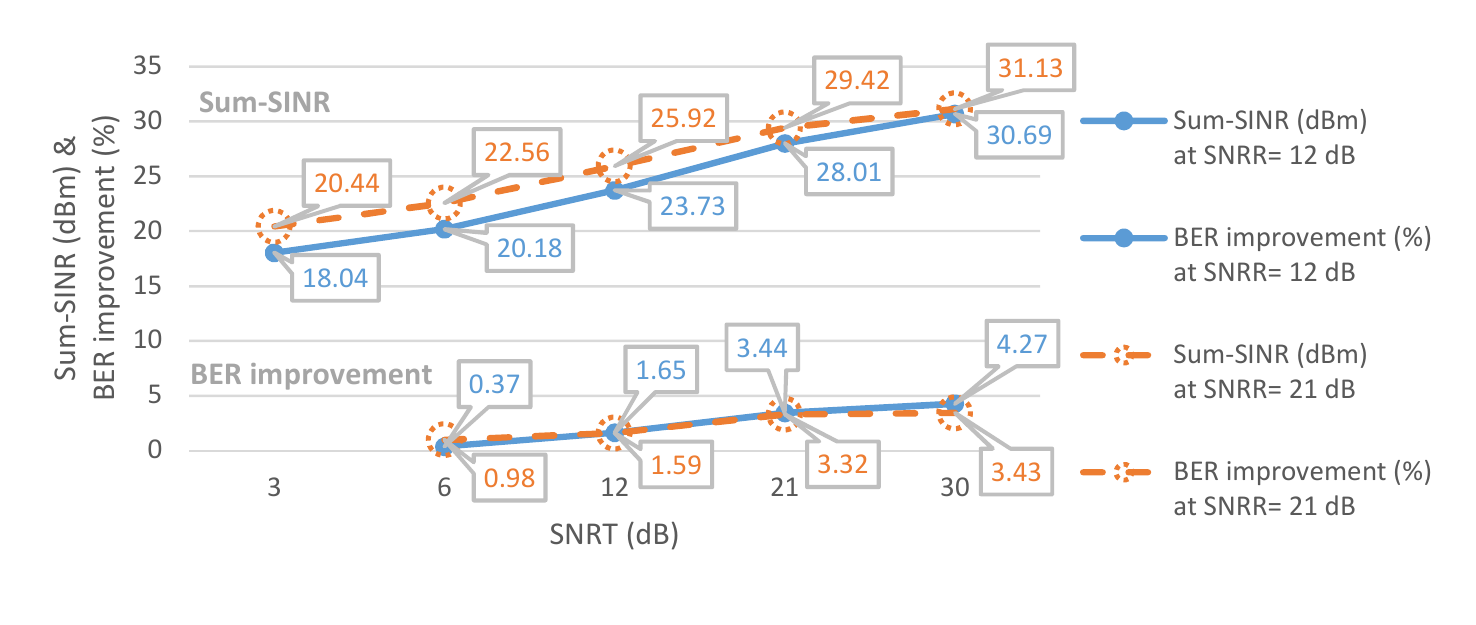}
  \caption{Average sum-SINR and the percentage of BER improvement  vs. $\tSNR_\tT$  of the proposed distributed algorithm in the $\mXnVIIIrIII$ network configuration.}
  \label{Fig:BERimprovement}
\end{figure}

{\subsection{Searching for Competitive SINR Targets (Fig. \ref{Fig:FeasibleSINRt})}\label{subsec:SINRtargets}

In this section, random filter and max-SINR filter initializations are considered to determine the lower and upper bounds of SINR targets,  respectively. In Fig. \ref{Fig:FeasibleSINRt}, the leftmost bars indicate the results of the proposed distributed algorithm with random filter initializations, which is applied in all simulations of earlier sections. The rightmost bars present the results of distributed max-SINR algorithm. The bars on the left and right sides of the middle indicate the results of the proposed distributed algorithm with $1$ iteration and $2$ iterations of linear SINR target search, respectively. As the iteration number is increased, results closer to the SINRs achieved by the \mbox{distributed max-SINR} algorithm can be obtained. In summary, by fixing the transmit power constraints, but by searching for higher feasible SINR targets, higher SINRs can be achieved at the cost of increased
power consumptions. This is an  important design trade-off in wireless networks. The results in Fig. \ref{Fig:FeasibleSINRt} are obtained over $20$ Monte Carlo channels.

\begin{figure}[!t]
 \centering
 \includegraphics[height=1.65in, width=3.55in] {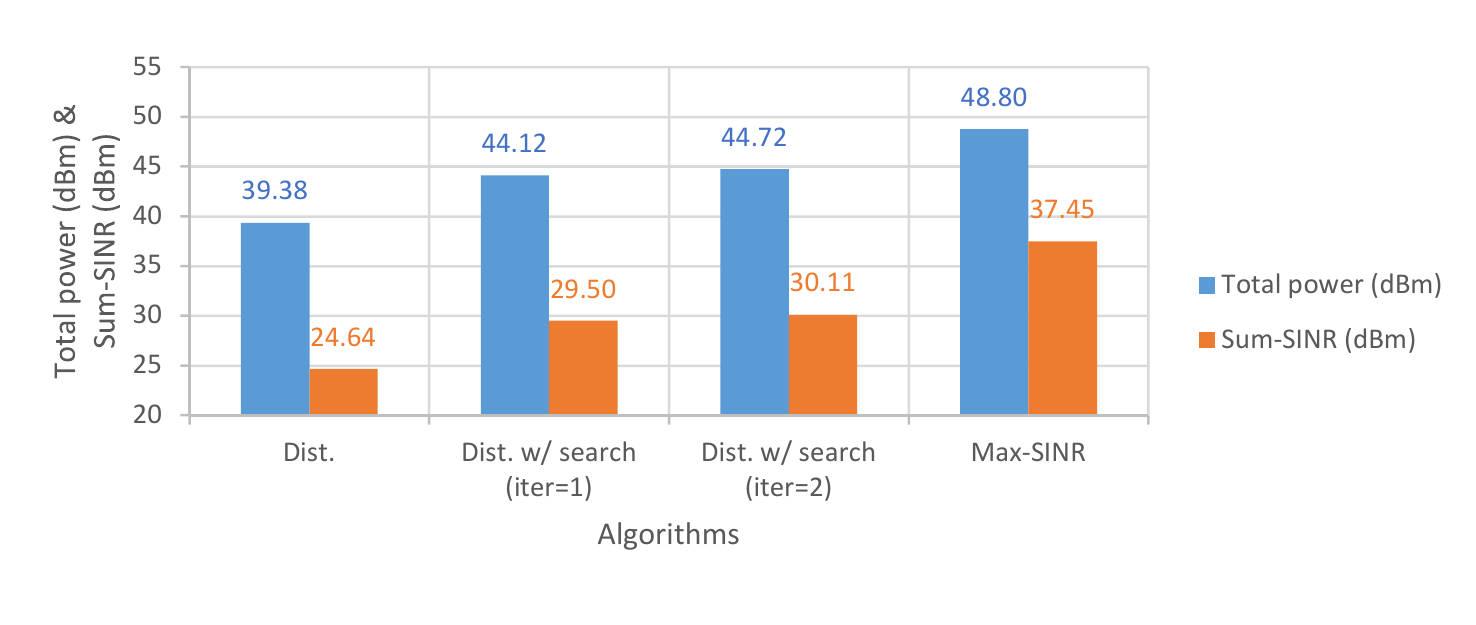}
  \caption{Average total power consumptions and sum-SINRs in the $\mVIIInVIIIrIII$ network configuration at 12-12 dB.}
  \label{Fig:FeasibleSINRt}
\end{figure}

\subsection{Distributed Joint Transmit and Relay Beamforming Filter Design (Table \ref{tab:180829_1509})}\label{subsec:DistributedJoint}
Distributed joint transmit and relay beamforming filter design improves the total power saving to achieve the same \mbox{sum-SINR} at the cost of increased complexity. Along the similar lines of the complexity analysis for transmitter side in Section \ref{subsec:CompComplexity}, the additional complexity due to the added distributed relay beamforming design is obtained as ${\mathcal{O}\big(B(2N+5)\big)}$ per relay, aka processor. Hence, as seen in \mbox{Table \ref{tab:180829_1509}} for the $\mXnXrIII$ network, the additional total complexity for all relays per iteration is given by ${\mathcal{O}\big(RB(2N+5)\big)}$ or ${\mathcal{O}\big(3\!\cdot\!6\!\cdot\!(2\!\cdot\!10\!+\!5)\!=\!450\big)}$, where ${B=6}$ since each user transmits $2$ streams as noted in the beginning of the section. On the other hand, as obtained in Section \ref{subsec:CompComplexity}, the total complexity of distributed transmitter optimization for all streams per iteration is  ${\mathcal{O}(B(M\!+\!5)\!=\!6\!\cdot\!(10\!+\!5)\!=\!90)}$.
Each relay serves all streams to achieve the  SINR targets. Thus, the distributed optimization at the relay side has higher number of constraints that substantially increases the total complexity. For the special case ${M\!=\!N\!\gg\!5}$, the ratio of the total complexities of the relay and transmitter sides is obtained as
\begin{equation}
\mathcal{O}\Bigg(\frac{RB(2N+5)}{B(M+5)}\Bigg)\approx\mathcal{O}(2R).
\end{equation}

Due to the joint optimization, the average iteration number also increases as seen in  Table \ref{tab:180829_1509}. Therefore, the average total complexities of the distributed transmit
and the distributed joint transmit and relay beamforming design algorithms are given as ${\mathcal{O}(16\cdot90=1440)}$ and ${\mathcal{O}(23\cdot540=12420)}$, respectively. The first three rows of results in Table \ref{tab:180829_1509} are obtained via numerical results averaged over $20$ Monte Carlo channels. The total power saving increases as more resources are allocated to the relay side, i.e., ${R>K}$, ${N>M}$, and ${\tSNR_\tR>\tSNR_\tT}$.

The conclusions drawn from the numerical results of distributed transmit beamforming optimization  are also valid for distributed joint transmit and relay beamforming optimization. For instance, recall the total power and sum-SINR results of increasing $\tSNR_\tT$ ${12\tm42}$, ${21\tm42}$, ${42\tm42}$ dB and increasing $\tSNR_\tR$ ${12\tm12}$, ${12\tm21}$, ${12\tm42}$ dB points in Fig. \ref{Fig:TPSSINvsSNR}. For the former case, higher sum-SINRs are achieved with lower power consumptions. On the other hand, for the latter case, increasing $\tSNR_\tR$ simply increases both the total power consumption and the sum-SINR.  Clearly, for distributed transmit beamforming optimization, providing more resources to the transmitter side returns better results. On the other hand, for distributed joint transmit and relay beamforming optimization, providing more resources to the dominant side returns better results. For instance, when the relay side is dominant ${R\gg K}$, clearly, increasing $\tSNR_\tR$ returns better results than increasing $\tSNR_\tT$.

\begin{table}
  \caption{PERFORMANCES OF DISTRIBUTED TRANSMIT VS. DISTRIBUTED JOINT TRANSMIT AND RELAY BEAMFORMING FILTER DESIGNS FOR THE $\mXnXrIII$ NETWORK AT $21\tm21$ dB.}\label{tab:180829_1509}\vspace{-.2cm}
  \begin{center}
    \begin{tabular}{|c|c|c|}
      \hline
      \multirow{2}{*}{\backslashbox{Metric}{Algorithm}} & Dist.  & Dist. joint\\
      & $\{\xU_k\}$ & $\{\xU_k,\xF_r\}$ \\ \hline
      Average sum-SINR (dBm) & 28.32 & 28.32 \\ \hline
      Average total power (dBm) & 46.05 & 45.23 \\ \hline
      Average iteration number & 16 & 23 \\ \hline 
      \!\!\!\!Total complexity per iteration\!\!\!\! &\! $\mathcal{O}(90)$  & \!$\mathcal{O}(90\!+\!450\!=\!540)$ \\ \hline
      Average total complexity &\!\!\!\!\! ${\mathcal{O}(16\!\cdot\!90\!=\!1440)}$ \!\!\! &\!\!\!\!\! ${\mathcal{O}(23\!\cdot\!540\!=\!12420)}$\!\!\! \\
      \hline
    \end{tabular}
  \end{center}
\end{table}

\section{Conclusion}\label{sec:Conclusion}
A distributed ADMM algorithm is proposed to design transmit beamforming matrices for a generic wireless relay network that has been hardly studied in the literature due to the challenges raised by the coexistence of multi-stream transmissions, multiple multi-antenna nodes, and the presence of direct links. The traits of the proposed algorithm are  low complexity,  iteration number,  and message exchange loads. Due to the challenge that the direct links bring, an approximate SINR formulation at the relay side is proposed to design distributed joint transmit and relay beamforming filters that further improve the total power saving at the cost of  increased complexity.
\begin{appendices}
\section{Closed-Form Solutions of MAX-SINR Filters}\label{app:DIA}
Both downlink and uplink iterations of the distributed \mbox{max-SINR} algorithm are based on the covariance matrix of interference signals
plus noise, which is second-order statistics information. The covariance matrix of downlink direction is similar to the one given for interference channels, i.e., there are no
relay nodes, in \cite{55}. In downlink direction, based on the covariance matrix at stream $(k,l)$
\begin{align}
\xQ\xkl=&\!\!\!\sum_{\substack{j=1,j\neq k}}^{K}\!\!\!\xH\xxkj\xU_j\xU_j^H\xH\xxkj^H+\!\!\!\!\!\!\sum_{\substack{m=1,m\neq l}}^{d_k}\!\!\!\!\!\!\xH\xxkk\xu\xkm\xu\xkm^H\xH\xxkk^H\nonumber\\
&+\xR_{n_k}/d_{k},
\end{align}
the receive filter is obtained as
\begin{equation}\label{eqn:rxfilter1}
\xv\xkl=\frac{\xQ\xkl^{-1}\xH\xxkk\xu\xkl}{||\xQ\xkl||}.
\end{equation}

For the uplink iteration,  we initially need to obtain the reciprocal channel $\overleftarrow{\xH}\xkRi^\prime$ of $\xH\xkRi^\prime$ in \eqref{eqn:effectivechannel}. The effective channel in
the uplink direction from transmitter $i$ to receiver $k$ through all relays is given as
\begin{equation}
 \overleftarrow{\xH}^\prime\xkRi=\sum_{r=1}^R\overleftarrow{\xH}_{kr}^{\prime\prime}\overleftarrow{\xF}_r\overleftarrow{\xG}_{ri},
\end{equation}
where $\overleftarrow{\xG}_{ri}\triangleq{\xG}_{ir}^H$ and $\overleftarrow{\xH}_{kr}^{\prime\prime}\triangleq\xH_{rk}^H$ are the ${N_r\times M_i}$ and ${M_k \times N_r}$ channel matrices
between transmitter $i$ and relay $r$, and relay $r$ and receiver $k$, respectively.
The covariance matrix of uplink direction can be obtained by interpreting the time domain, i.e., the two time slots due to the relay network communication, as the space domain.  Thus, the covariance matrix at stream $(k,l)$ is given as
\begin{align}
\overleftarrow{\xQ}\xkl=&\sum_{\substack{j=1,j\neq k}}^{K}\Big(\overleftarrow{\xJ}\xxkj\overleftarrow{\xU}_j(1)\overleftarrow{\xU}_j^H(1)\overleftarrow{\xJ}\xxkj^H\nonumber\\
&+\overleftarrow{\xH}^\prime\xkRj\overleftarrow{\xU}_j(2)\overleftarrow{\xU}_j^H(2)\overleftarrow{\xH}^{\prime H}\xkRj \Big)\nonumber\\
&+\!\!\!\sum_{\substack{m=1,m\neq l}}^{d_k}\Big(\overleftarrow{\xJ}\xxkk\overleftarrow{\xu}\xkm(1)\overleftarrow{\xu}\xkm^H(1)\overleftarrow{\xJ}\xxkk^H\nonumber\\
&+\overleftarrow{\xH}^\prime\xkRk\overleftarrow{\xu}\xkm(2)\overleftarrow{\xu}\xkm^H(2)\overleftarrow{\xH}^{\prime H}\xkRk \Big)+\overleftarrow{\xR}_{n_k}/d_k,
\end{align}
where $\overleftarrow{\xJ}\xxkj\triangleq\xJ_{jk}^H$ is the $M_k\times M_j$ channel matrix for the direct link between transmitter $j$ and receiver $k$, and
\begin{equation}
\overleftarrow{\xR}_{n_k}\!\triangleq\left(\sigma_k^2(1)+\sigma_k^2(2)\right)\xI_{M_k}\!\!+\!\sigma_r^2\sum_{r=1}^R\overleftarrow{\xH}_{kr}^{\prime\prime}\overleftarrow{\xF_r}\overleftarrow{\xF}_r^H\overleftarrow{\xH}_{kr}^{\prime\prime
H}
\end{equation}
is the noise covariance matrix. Then,
\begin{equation}
\overleftarrow{\xv}\xkl^\prime=\frac{\overleftarrow{\xQ}\xkl^{-1}\left(\overleftarrow{\xJ}\xxkk\overleftarrow{\xu}\xkl(1)+\overleftarrow{\xH}^\prime\xkRk\overleftarrow{\xu}\xkl(2)\right)}{||\overleftarrow{\xQ}\xkl||}
\end{equation}
and finally, the receive filter is obtained as
\begin{equation}\label{eqn:rxfilter2}
\overleftarrow{\xv}\xkl=\sqrt{p_k/d_k}\frac{\overleftarrow{\xv}\xkl^\prime}{||\overleftarrow{\xv}\xkl^\prime||}.
\end{equation}

As detailed in \cite{55}, after obtaining the receive filter \eqref{eqn:rxfilter1} in the downlink direction, the uplink direction is iterated by setting the transmit filter as ${\overleftarrow{\xu}\xkl=\xv\xkl}$.
After obtaining the receive filter \eqref{eqn:rxfilter2} in the uplink direction, the downlink direction is re-iterated by setting the transmit filter as ${\xu\xkl=\overleftarrow{\xv}\xkl}$.
The uplink and downlink iterations are continued until the convergence.

\section{Derivation Details of SINR Approximation \eqref{eqn:171212_1640}} \label{app:SINRApprox}
The term ${\zeta\xkl\yin(2)}$ in  \eqref{eqn:171124_1102} can be rewritten as
\begin{align}\label{eqn:180515_1310}
&\xv\xkl^H(2)\xH^{\prime}\xkRi\xu\xin=\xv\xkl^H(2)\sum_{r=1}^R\xG\xxkr\xF_r\xH_{ri}^{\prime\prime}\xu\xin\nonumber\\
&~~~~~~~~=\sum_{r=1}^R\xa\xxklr^H\xF_r\xb\xxrin=\xf_r^T\xc\xklrin+r_{klsin},
\end{align}
where
 \begin{align*}
 \xa\xxklr&\triangleq\xG\xxkr^H\xv\xkl(2),\xb\xxrin\triangleq\xH_{ri}^{\prime\prime}\xu\xin,\\
 r_{klsin}&\triangleq\sum_{s=1,s\neq r}^R\xa_{kls}^H\xF_s\xb_{sin},\\
 \xc\xklrin&\triangleq\xb\xxrin\otimes\xa\xxklr^* \text{, and }\xf_r\triangleq\tvec(\xF_r).
 \end{align*}
$\otimes$ and $\tvec(.)$ denotes the Kronecker product and denotes staking the columns of a matrix in a column vector, respectively. Therefore,
\begin{align}
&\big|\xv\xkl^H(2)\xH^{\prime}\xkRi\xu\xin\big|^2\!=\!(\xf_r^T\xc\xklrin+r_{klsin})(\xf_r^T\xc\xklrin\!+\!r_{klsin})^*\nonumber\\
&=\!\xf_r^T\xc\xklrin\xf_r^H\xc\xklrin^*\!+\!\xf_r^Tr_{klsin}^*\xc\xklrin\!+\!\xf_r^Hr_{klsin}\xc\xklrin^*\!+\!|r_{klsin}|^2\nonumber\\
&=\!\xf_r^H\xc\xklrin^*\xc\xklrin^T\xf_r\!+\!\xf_r^Hr_{klsin}\xc\xklrin^*\!+\!r_{klsin}^*\xc\xklrin^T\xf_r\!+\!|r_{klsin}|^2\nonumber\\
&=\!\xf_r^H\xC\xklrin\xf_r\!+\!\xf_r^H\xd\xklrin\!+\!\xd\xklrin^H\xf_r\!+\!|r_{klsin}|^2,
\end{align}
where $$\xC\xklrin\triangleq\xc\xklrin^*\xc\xklrin^T \text{ and } \xd\xklrin\triangleq r_{klsin}\xc\xklrin^*.$$
Now consider the following term in \eqref{eqn:180515_1330}
\begin{align}
&\xv\xkl^H(2)\sum_{r=1}^R\xG\xxkr\xF_r=\sum_{r=1}^R\xa\xxklr^H\xF_r\nonumber\\
&~~~~~~~~~~=[\xf_r^T\xo_{klr1}~\xf_r^T\xo_{klr2} \ldots \xf_r^T\xo_{klrN}]+\xt_{kls}^H,
\end{align}
where $$\xo_{klrn}\triangleq\xe_n\otimes\xa\xxklr^*, \xt_{kls}\triangleq\sum_{s=1\\s\neq r}^R\xF_s^H\xa_{kls}\text{, and}$$ $\xe_n$ is an $N\times1$ zero vector except 1 at the $n\yth$ row. Therefore,
\begin{align}
\big|\xv\xkl^H(2)\sum_{r=1}^R\xG\xxkr\xF_r\big|^2=&\sum_{n=1}^N\big|\xf_r^T\xo_{klrn}\big|^2+t_{kls}\nonumber\\
=&\xf_r^H\Big(\sum_{n=1}^N\xo_{klrn}^*\xo_{klrn}^T\Big)\xf_r+t_{kls},
\end{align}
where ${t_{kls}\triangleq\xt_{kls}^H\xt_{kls}}$. Hence, the noise power in the second time slot is
\begin{align}
\sigma_{n\xkl}^2(2)=&\sigma_r^2\big|\xv\xkl^H(2)\sum_{r=1}^R\xG\xxkr\xF_r\big|^2+\sigma_k^2(2)\xv\xkl^H(2)\xv\xkl(2)\nonumber\\
=&\xf_r^H\xO\xxklr\xf_r+\sigma\xxklr^2,
\end{align}

where
\begin{align*}
\xO\xxklr&\triangleq\sigma_r^2\sum_{n=1}^N\xo_{klrn}^*\xo_{klrn}^T \text{ and }\\
\sigma\xxklr^2&\triangleq\sigma_r^2t_{kls}+\sigma_k^2(2)\xv\xkl^H(2)\xv\xkl(2).
\end{align*}
Hence, the approximate SINR as a function of relay filter $\xf_r$ is given as

\begin{equation}
\widehat{\tSINR}\xkl(\xf_r)=\frac{x}{\sum_{(j,m)\neq(k,l)}\big(y\big)+\tilde{\sigma}_{n\xkl}^2},
\end{equation}
where
\begin{align*}
x\!&\triangleq\!\zeta\xkl\ykl(1)\!+\!\xf_r^H\xC_{klrkl}\xf_r\!+\!\xf_r^H\xd\xklrkl\!+\!\xd\xklrkl^H\xf_r\!+\!|r_{klskl}|^2,\\
y\!&\triangleq\!\zeta\xkl\yjm(1)\!+\!\xf_r^H\xC_{klrjm}\xf_r\!+\!\xf_r^H\xd\xklrjm\!+\!\xd\xklrjm^H\xf_r\!\!+\!\!|r_{klsjm}\!|^2,\\
\tilde{\sigma}_{n\xkl}^2\!\!&\triangleq\sigma_k^2(1)\xv\xkl^H(1)\xv\xkl(1)+\xf_r^H\xO\xxklr\xf_r+\sigma\xxklr^2.
\end{align*}

Furthermore,
\begin{equation}
\widehat{\tSINR}\xkl(\xf_r)\!\!=\!\!\frac{\xf_r^H\xC_{klrkl}\xf_r\!+\!\xf_r^H\xd\xklrkl\!+\!\xd\xklrkl^H\xf_r\!+\!r_{1,klrkl}}{\xf_r^H\bar{\xC}_{klrjm}\xf_r\!+\!\xf_r^H\bar{\xd}\xklrjm\!+\!\bar{\xd}\xklrjm^H\xf_r\!+\!r_{2,klrjm}},
\end{equation}
where
\begin{align*}
r_{1,klrkl}&\triangleq\zeta\xkl\ykl(1)+|r_{klskl}|^2,\\
\bar{\xC}_{klrjm}\!&\triangleq\!\!\!\!\!\!\!\!\sum_{(j,m)\neq(k,l)}\!\!\!\!\!\xC_{klrjm}\!+\!\xO\xxklr,\bar{\xd}\xklrjm\triangleq\!\!\!\!\!\!\!\!\!\sum_{(j,m)\neq(k,l)}\!\!\!\!\!\xd\xklrjm \text{, and }\\
r_{2,klrjm}\!&\triangleq\!\!\!\!\!\!\!\sum_{(j,m)\neq(k,l)}\!\!\!\!\!\big(\zeta\xkl\yjm(1)+|r_{klsjm}|^2\big)\\
&~~~~~~~~~~+\sigma_k^2(1)\xv\xkl^H(1)\xv\xkl(1)+\sigma\xxklr^2.
\end{align*}
Note that ${\widehat{\tSINR}\xkl(\xf_r)}$ is in an inhomogeneous quadratic form. To rewrite it in homogeneous quadratic form, let
\begin{align}
\tilde{\xC}_{klrkl}\triangleq&\left[
  \begin{array}{cc}
    \xC_{klrkl} & \xd\xklrkl \\
    \xd\xklrkl^H & r_{1,klrkl} \\
  \end{array}
\right] \text{ and }\nonumber\\
\tilde{\xC}_{klrjm}\triangleq&\left[
  \begin{array}{cc}
    \bar{\xC}_{klrjm} & \bar{\xd}\xklrjm \\
    \bar{\xd}\xklrjm^H & r_{2,klrjm} \\
  \end{array}
\right].
\end{align}
Thus, \eqref{eqn:171212_1640} is obtained.
\end{appendices}

\bibliographystyle{IEEEtran}

\end{document}